\documentclass[AMA,Times1COL]{WileyNJDv5} 

\usepackage{graphicx}
\usepackage{subfig}
\usepackage{amssymb,amsmath,amsthm}
\usepackage{mathrsfs} 
\usepackage{float}
\usepackage[section]{placeins}
\usepackage{dutchcal} 
\usepackage{color, colortbl}
\usepackage{makecell} 
\usepackage[shortlabels]{enumitem}

\articletype{}%

\received{Date Month Year}
\revised{Date Month Year}
\accepted{Date Month Year}
\journal{Journal}
\volume{00}
\copyyear{2023}
\startpage{1}

\raggedbottom

\newcommand{\prob}{\mathbb{P}}
\newcommand{\ee}{\mathbb{E}}
\newcommand{\ind}[1]{I_{#1}}
\newcommand{\indd}[1]{\ind{\{#1\}}}

\newcommand{\cT}{\mathcal{T}}

\newcommand\dZ{\mathbb{Z}}

\newcommand\Tmax{T_{max}}
\newcommand{\har}{\widehat{\text{C}}^{\text{har}}_n}
\newcommand{\uno}{\widehat{\text{C}}^{\text{uno}}_n}
\newcommand{\tdc}{\widehat{\text{C}}_n}
\newcommand{\tdu}{\widehat{\text{C}}^w_n}

\theoremstyle{definition}

\theoremstyle{remark}

\theoremstyle{plain}

\newcommand\T{\mathbb{T}}

\begin{document}

\title{A new discrimination measure for assessing predictive performance of non-linear survival models}

\author[1,2]{Alfensi Faruk}

\author[1]{Jan Palczewski}

\author[1]{Georgios Aivaliotis}

\authormark{FARUK \textsc{et al.}}
\titlemark{A new discrimination measure for assessing predictive performance of non-linear survival models}

\address[1]{\orgdiv{Department of Statistics}, \orgname{University of Leeds}, \orgaddress{\state{Leeds}, \country{UK}}}

\address[2]{\orgdiv{Department of Mathematics}, \orgname{Universitas Sriwijaya}, \orgaddress{\state{Indralaya}, \country{Indonesia}}}

\corres{ \email{G.Aivaliotis@leeds.ac.uk}}

\presentaddress{University of Leeds, Leeds, LS2 9JT, UK}


\abstract[Abstract]{Non-linear survival models are flexible models in which the proportional hazard assumption is not required. This poses difficulties in their evaluation. We introduce a new discrimination measure, time-dependent Uno's C-index, to assess the discrimination performance of non-linear survival models. This is an unbiased version of Antolini's time-dependent concordance. We prove convergence of both measures employing Nolan and Pollard's results on U-statistics. We explore the relationship between these measures and, in particular, the bias of Antolini's concordance in the presence of censoring using simulated data. We demonstrate the value of time-dependent Uno's C-index for the evaluation of models trained on censored real data and for model tuning.}

\keywords{Non-linear survival models, discrimination measures, censoring, time-dependent Uno's C-index, convergence, U-statistics}

\jnlcitation{\cname{
\author{A. Faruk}, 
\author{J. Palczewski}, and 
\author{G. Aivaliotis}} (\cyear{xxxx}), 
\ctitle{A new discrimination measure for assessing the predictive performance of non-linear survival models}, \cjournal{Statistics in Medicine}, \cvol{xxxx;00:x--x}.}

\maketitle

\section{Introduction}\label{sec1}

Survival modelling, a celebrated area of success in statistics since the appearance of the Cox proportional hazards (CPH) model \cite{cox1972regression}, has been of immense importance to a number of applications, including in finance and medicine. Advances in statistical models for survival analysis \cite{kleinbaum2010survival,yang2013cocktail} have allowed for more complex relationships between covariates and the risk. They remained, however, focused on some form of (generalised) linear relationship. The introduction of machine learning (ML) and artificial intelligence (AI) freed the users from the need to make  linearity assumptions and allowed for complicated and potentially highly non-linear relationships between covariates and the risk. This enhanced the potential for accurate predictions albeit at the expense of interpretability and additional challenges in measuring the prediction accuracy.

Assessing the accuracy of risk predictions is crucial as it enhances users' confidence in their models and decisions. The literature contains a variety of measures of prediction accuracy. These measures fall broadly into two categories: calibration and discrimination. Calibration assesses how closely the number of predicted events matches the number of observed events over the follow-up time. Hosmer-Lemeshow goodness-of-fit test\cite{hosmer1980goodness}, D-calibration \cite{haider2020effective,goldstein2020x}, and Brier score \cite{graf1999assessment} are among the common calibration measures in survival analysis. Discrimination, on the other hand, assesses a model's ability to distinguish between higher and lower-risk individuals. For a review of discrimination measures for survival models see \citep{rahman2017review}. Concordance indexes try to estimate the probability of ``agreement'' between the risk prediction for a random pair of ``at risk'' individuals/objects and the actual outcome. Harrell's concordance index (C-index), \cite{harrell1982evaluating} is the most established one, requiring univariate predicted scores as the primary outputs of the evaluated models. C-index is more suitable for assessing the predictive performance of survival models with proportional hazard (PH) assumption, such as the CPH model and some common parametric survival models. The PH assumption states that the risks of two individuals experiencing the event of interest are always proportional to each other, so that a single risk score is capable of discriminating between more and less risky individuals. 
In problems with longer follow-up periods or with more complex relationship between covariates and the survival probability, it is less likely to maintain proportionality of risks. In such a situation, PH survival models and standard discrimination measures, e.g. Harrell's C-index, are no longer suitable.

One of the challenges arising with the introduction of non-linear survival models (such as ML and AI models but also statistical models) is the violation of the PH assumption. This allows for the possibility of a reversal of the relative risk relationship between two individuals. Therefore, discrimination measures based on evaluating the survival curve at one specific time point or abstracting the survival curve into a univariate risk score in a more general fashion are not suitable. Indeed, in non-PH survival models, the full predicted survival probability curve needs to enter into the evaluation. Antolini's \cite{antolini2005time} ``time-dependent'' version of Harrell's C-index called time-dependent concordance, lends itself to address this challenge as it uses the risk predicted by the model at the time of the first event between each pair of individuals. The time-dependent concordance is equivalently established as the weighted average of time-dependent area under curve (AUC) \cite{heagerty2005survival, pepe2008evaluating}, which is the area under receiver operating characteristics (ROC) curve drawn by plotting false-positive rate and true-positive rate in x-axis and y-axis, respectively. This is a promising measure addressing the limitations of Harrell's C-index, however no proof of convergence of the time-dependent concordance estimator to the probability of concordance is provided in the original paper by Antolini et al. \cite{antolini2005time} and in the remaining literature. We address this issue in the present paper and provide a detailed proof of the convergence of the original estimator by Antolini et al. \cite{antolini2005time}. 

Another well-documented problem with evaluation of survival models is a potential bias of estimators due to the presence of censoring in data \cite{gerds2013estimating}. Uno's C-index \cite{uno2011c} was developed to overcome the bias of the original Harrell's C-index by applying an inverse probability of censoring weighted (IPCW) approach, where the censoring time is independent from the covariates and failure times and identically distributed between individuals. Uno et al. \cite{uno2011c} show that their estimator converges in probability to the probability of concordance, generalising results of Harrell to the censored data. Several IPCW-based C-indexes for standard Harrell's C-index, where the censoring survival probability is conditional on covariates, are discussed by Gerds et al. \citep{gerds2013estimating}.  

This paper focuses on discrimination measures for non-linear models and right-censored data for a single event of interest. A novel estimator of time-dependent concordance is proposed; we call it time-dependent Uno's C-index as it uses the idea of IPCW as well. We show that this estimator is unbiased under the same conditions for censoring distribution as in the original Uno's paper. 
Our analysis also identifies a gap in the original result of Uno et al. \cite{uno2011c}; an additional condition on the censoring distribution is required for its validity. This condition ensures that the survival function of the censoring distribution is uniformly separated from zero, see Remark \ref{rem:uno}.

The measures discussed in this paper only use as input the event times, the censoring distribution and the predicted risk. The predicted risk can be the output of any relevant model and the model itself is not important. Although many ML approaches have been proposed to deal with right-censored survival data (see Wang et al.\cite{wang2019machine} for a complete review), neural networks and random (survival) forests (Breiman et al. \citep{breiman2001random}, Ishwaran et al.\cite{ishwaran2008random}) are still predominantly applied by medical researchers \cite{huang2023application}. Some state-of-the-art neural networks for survival analysis, such as DeepSurv\cite{katzman2018deepsurv}, Cox-nnet\cite{ching2018cox}, Nnet-survival\cite{gensheimer2019scalable}, can be found in current literature. Nnet-survival and RSF, are chosen as the primary examples of non-linear survival models to be evaluated using our proposed discrimination measures. 

We compare the performance of the time-dependent Uno's C-index with the time-dependent concordance and C-index in a number of experiments. We show how a non-PH scenario does not affect the time-dependent measures while it renders the original Harrell's C-index unusable and potentially liable for manipulation. Through a simulation study, where we control the rate of censoring in the test data, we establish the advantage of the IPCW estimators over the non-weighted ones. We investigate the relationship between the time-dependent Uno and the time-dependent concordance. We demonstrate that time-dependent concordance can be either upward or downward biased depending on the relationship between the distribution of censoring times and the accuracy of predicted survival curves at various times. We show that the time-dependent Uno's C-index is significantly less affected by the censoring rates in all cases; some effect of the censoring on the accuracy of estimators is always maintained due to the reduction of the amount of data in regions with high censoring probability. This disappears in our estimator when the amount of data increases but persists in the time-dependent concordance. Our findings shed light on the relationship between the time-dependent Uno's C-index and the time-dependent concordance that we observe in the Heart Failure data\cite{ahmad2017survival}. 

We apply our concordance measure to evaluate the predictive performance of non-linear survival models trained on two real datasets: Heart Failure \cite{ahmad2017survival} and The Cancer Genome Atlas mutations \cite{kandoth2013mutational}.
We demonstrate the applicability of our measure to the evaluation of models and to hyper-parameter tuning.

This paper is structured as follows. Section~\ref{sec2} presents the modelling framework, existing evaluation measures for survival models and their properties and derives our time-dependent Uno's C-index. In Section~\ref{sec3} we perform a thorough analysis of evaluation measures on simulated datasets. Section~\ref{sec4} contains studies on real datasets. Section~\ref{sec5} discusses our results in the backdrop of the relevant literature. Proofs of Theorems are collected in the Appendix~\ref{app1}.

\section{Discrimination measures}\label{sec2}

Let the event time of an individual $i$, $i=1,\cdots,n$, be denoted by $T_i$, either a random variable or its realisation in the data, depending on the context. Throughout this paper, a survival model is defined as a function 
\begin{equation*}
    S: (t,\dZ_i) \mapsto [0,1],
\end{equation*}
where $\dZ_i=[Z_{i1} \cdots Z_{ip}]'$ is $(p \times 1)$ vector of covariates of individual $i$. The predicted survival curve, $S(t;\dZ_i)$, is the predicted probability of $T_i > t$. It can be interpreted as the predicted probability of individual $i$ surviving from the beginning of the follow-up up to time $t \in [0,\Tmax]$.

Let $D_i$ be the censoring time of individual $i$, which is assumed to be independent between individuals, independent from the event time $T_i$ and the covariates with some unknown but fixed distribution with the tail distribution function $G(t) = \prob(D_i > t)$. The notation $D_i$ will either stand for a random variable or its realisation in the data. We write the observed time as $X_i = T_i \wedge D_i := \min(T_i, D_i)$ -- this is the time available from the data; it is either the event time if it happens before censoring or, otherwise, the censoring time. We will use subscripts $i$, $j$ to denote two randomly chosen individuals from the whole population or the sample data depending on the context. Note that any individual $i$ with $X_i=\Tmax$ is administratively censored.

Throughout the paper, we make the following assumptions about the population distribution:

\begin{enumerate}[start=1,label={(R\arabic*):}]
\item $(\dZ_i, T_i, D_i)$ are i.i.d. with an (unknown) distribution $F$,
	\item $(D_i)$ are independent from $(\dZ_i, T_i)$ with an (unknown) tail distribution function $G$.
\end{enumerate}

\begin{remark}\label{remark1}
    When dealing with discrete-time units, the realisation of $T_i, D_i$, and $X_i$ belongs to a set of possible event times, $\cT = \{1, \cdots, \Tmax\}$, where the elements of $\cT$ are referred to as `periods'. $\cT$ may be dictated by the data or obtained by discretising a follow-up $[0,\text{B}]$ into $\Tmax$ time intervals $[a_0,a_1),[a_1,a_2),\cdots,[a_{T_{max}},\infty)$, where $a_0=0$, and $\Tmax$ is the period in which all individuals are administratively censored. We often discretise data into a sufficiently small number of periods as the structure of many standard discrete-time survival models cannot handle a large number of periods. Throughout this work, unless it is stated that a discrete-time setting applies, we consider a continuous-time setting.
\end{remark}

 In the sections that follow, we assume that the, possibly non-linear, survival model $S$ is given and we assess its performance using a test dataset of size $n$, independent of the training set. The tail distribution of censoring times $G$ is approximated by an empirical tail function $\widehat G_n$ based on observed censoring times $\{D_i,\cdots,D_n\}$ in the test data.
 Throughout this paper, we use KM estimator to compute $\widehat G_n$. The subindex $n$ on the index notation indicates its sample estimator.

\subsection{Existing Discrimination Measures}\label{subsec:discrimination}

This section introduces several discrimination measures for assessing the discrimination performance of non-linear survival models.
We first adapt standard measures, such as Harrell's C-index and Uno's C-index, to a fixed time point $t$, making it suitable for discrete time problems. We then discuss the time-dependent concordance and propose its IPCW generalisation, namely time-dependent Uno.

\subsubsection{Harrell's C-Index}\label{sec:harrell}

Harrell's C-index \cite{harrell1982evaluating} was developed for assessing the discrimination capability of a model. It is defined by the ratio of the number of all concordant pairs to the number of all ``usable'' pairs:
\begin{equation}\label{eqn:cont_harr}
    \har=\frac{\sum_{i\neq j}^n\indd{T_i < D_i}\indd{\widehat{\mu}_i  > \widehat{\mu}_j }\indd{T_i<X_j}}{\sum_{i\neq j}^n \indd{T_i < D_i}\indd{T_i<X_j}},
\end{equation}
where $\widehat{\mu}_i$ and $\widehat{\mu}_j$ are the predicted risk scores of individuals $i$ and $j$. An ordered pair of two different individuals $(i \ne j)$ with the respective observed times $X_i$ and $X_j$  is usable if both individuals are uncensored, or the first is uncensored and the censored time of the other is larger than the uncensored time. 

Applying the idea in \eqref{eqn:cont_harr} to potentially non-PH survival models, although not recommended as we will explain later, requires specification of the risk score; we take the predicted survival curve at a fixed time $t$. Note that using survival curve instead of risk score in C-index is common for the evaluation of survival models\cite{antolini2005time,gensheimer2019scalable}. 
By applying the predicted survival curve as the evaluated quantity, the adaptation of \eqref{eqn:cont_harr} to a fixed time $t$ is defined as follows 
\begin{equation}\label{eqn:trunc_harr}
    \har(t)=\frac{\sum_{i\neq j}^n\indd{T_i < D_i}\indd{S(t; \dZ_i)  < S(t; \dZ_j) }\indd{T_i<X_j}}{\sum_{i\neq j}^n \indd{T_i < D_i}\indd{T_i<X_j}},
\end{equation}
where we define that the usable pair is concordant if $S(t; \dZ_i)$ is less than $S(t; \dZ_j)$ when $T_i < X_j$. 

According to Uno et al.\cite{uno2011c}, as $n \to \infty$, $\har$ converges in probability to 
\begin{equation}\label{eqn:prob_har}
    \prob\left[\mu_i > \mu_j \big|T_i < T_j,T_i < D_i \wedge D_j\right].
\end{equation}
 Similarly, $\har(t)$ will converge in probability as $n \to \infty$ to
\begin{equation}\label{eqn:prob_har_t}
    \prob\left[S(t; \dZ_i) < S(t; \dZ_j)\big|T_i < T_j,T_i < D_i \wedge D_j \right].
\end{equation}

\subsubsection{Uno's C-Index}\label{sec:uno}
Harrell's C-index computes the concordance for pairs for which $T_i < D_i\wedge D_j$, which includes elements of the censoring distribution. Uno et al.\cite{uno2011c} proposed the IPCW-based C-index:
\begin{equation}\label{eqn:cont_uno}
    \uno=\frac{\sum_{i\neq j}^n\indd{T_i < D_i}\indd{\widehat{\mu}_i  > \widehat{\mu}_j }\indd{T_i<X_j}\widehat{G}_n^{-2}(T_i)}{\sum_{i\neq j}^n \indd{T_i < D_i}\indd{T_i<X_j}\widehat{G}_n^{-2}(T_i)}.
\end{equation}
Uno's C-index evaluated at a specific time $t \in [0,\Tmax)$ is defined as follows:
\begin{equation}\label{eqn:uno_t}
    \uno(t)=\frac{\sum_{i\neq j}^n \indd{T_i< D_i}  \indd{S(t; \dZ_i) <S(t; \dZ_j) }\indd{T_i<X_j, T_i < \Tmax}\widehat{G}_n^{-2}(T_i)}{\sum_{i\neq j}^n \indd{T_i < D_i}\indd{T_i<X_j,T_i < \Tmax}\widehat{G}_n^{-2}(T_i)},
\end{equation}
where $\widehat{G}_n(T_i)$ is the predicted survival curve computed by KM estimator for censoring times based on $n$ samples from the test data. $\Tmax$ is a pre-specified cut-off time such that $G(\Tmax)>0$. Both $\uno$ and $\uno(t)$ are unbiased converging in probability as $n \to \infty$ to the population probabilities\citep{uno2011c} respectively
\begin{equation}\label{eqn:prob_uno}
    \prob\left[\mu_i > \mu_j \big|T_i < T_j,T_i < \Tmax\right],
\end{equation}
and
\begin{equation}\label{eqn:prob_uno_t}
	\prob\left[S(t; \dZ_i) < S(t; \dZ_j)\big|T_i < T_j,T_i < \Tmax\right].
\end{equation}

\begin{remark}\label{rem:uno}
As we discuss later in Remark \ref{rem:epsilon} and can be seen in the proof of Theorem \ref{theo:conv_tdu}, the almost sure convergence of the above estimators require that there is an priori known constant $\epsilon > 0$ such that $\widehat G_n \ge \epsilon$. As far as we could see in the original paper of Uno et al.\cite{uno2011c}, their arguments also require this assumption for their validity. This is related with the use of the uniform convergence results for U-statistics.

\end{remark}

\subsubsection{Time-Dependent Concordance}\label{subsect:tdc}

Antolini et al.\cite{antolini2005time} used the weighted time-dependent AUC to derive a discrimination measure for models that use time-dependent covariates, therefore suitable for survival models without the PH assumption. Adjusting to our setting (truncating at $T_i < \Tmax$), their measure estimates
\begin{equation}\label{eqn:ctd_antolini}
    \begin{aligned}
    \text{C}
    &=
    \prob\big[S(T_i;\dZ_i)<S(T_i;\dZ_j)\big| T_i<T_j, T_i < \Tmax\big]\\
    &= 
    \frac{\prob\big[S(T_i; \dZ_i) < S(T_i;\dZ_j), T_i<T_j, T_i < \Tmax\big]}{\prob\big[T_i<T_j, T_i < \Tmax\big]}\\
    &=
    \frac{\ee\big[ \indd{S(T_i;\dZ_i)<S(T_i;\dZ_j)}\indd{T_i < T_j, T_i < \Tmax}\big]}{\ee\big[\indd{T_i < T_j, T_i < \Tmax}\big]}.
    \end{aligned}
\end{equation}

The estimator of \eqref{eqn:ctd_antolini} can be derived by a direct approximation of the numerator and denominator of the last equality of \eqref{eqn:ctd_antolini} in the presence of censoring as follows
\begin{equation}\label{eqn:est_tdc}
	\tdc=\frac{\sum_{i\neq j}^n \indd{T_i < D_i}\indd{S(T_i;\dZ_i)<S(T_i;\dZ_j)}\indd{T_i<X_j, T_i < \Tmax}}{\sum_{i\neq j}^n \indd{T_i < D_i}\indd{T_i<X_j, T_i < \Tmax}},
\end{equation}
where we refer to $\tdc$ as time-dependent concordance. However, Antolini et al.\cite{antolini2005time} do not provide proof of convergence of time-dependent concordance. We state the convergence of $\tdc$ in Theorem~\ref{lem1}, where the corresponding proof can be seen in Appendix~\ref{app1.1a}. In Section~\ref{subsect:tdu}, we will amend $\tdc$ so that the correct convergence is obtained.

\begin{theorem}[Convergence of Time-Dependent Concordance]\label{lem1}
Assume (R1-R2). Then,
	\begin{equation*}
			\tdc \overset{a.s.}{\to} \prob\big[S(T_i;\dZ_i)<S(T_i;\dZ_j)\big| T_i<T_j, T_i < \Tmax,  T_i<D_i\wedge D_j\big].
	\end{equation*}
\end{theorem}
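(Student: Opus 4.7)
The plan is to recognise both the numerator and denominator of $\tdc$ as second-order U-statistics (after dividing each by $n(n-1)$), apply the strong law of large numbers for U-statistics, and conclude via the continuous mapping theorem on the ratio.

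Set $\xi_i=(\dZ_i,T_i,D_i)$ and define the bounded kernels
\begin{align*}
k_N(\xi_i,\xi_j) &= \indd{T_i < D_i}\indd{S(T_i;\dZ_i) < S(T_i;\dZ_j)}\indd{T_i < X_j, T_i < \Tmax},\\
k_D(\xi_i,\xi_j) &= \indd{T_i < D_i}\indd{T_i < X_j, T_i < \Tmax}.
\end{align*}
Their symmetrisations $h_\bullet(\xi_i,\xi_j)=\tfrac12[k_\bullet(\xi_i,\xi_j)+k_\bullet(\xi_j,\xi_i)]$ satisfy $\sum_{i\ne j}k_\bullet=2\sum_{i<j}h_\bullet$, and boundedness by $1$ trivially secures the integrability hypothesis of Hoeffding's SLLN for U-statistics. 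Hence, under (R1),
\begin{equation*}
\frac{1}{n(n-1)}\sum_{i\ne j}k_\bullet(\xi_i,\xi_j)\;\overset{a.s.}{\longrightarrow}\;\ee[k_\bullet(\xi_1,\xi_2)].
\end{equation*}

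The expectations are computed by conditioning on $(\dZ_1,T_1,\dZ_2,T_2)$ and invoking (R2), under which $D_1,D_2$ are i.i.d.\ with tail $G$ and independent of that $\sigma$-algebra. Using $\{T_1<X_2\}=\{T_1<T_2\}\cap\{T_1<D_2\}$, one obtains $\prob(T_1<D_1, T_1<D_2 \mid \dZ_1,T_1,\dZ_2,T_2) = G(T_1)^2$, and therefore
\begin{align*}
\ee[k_D(\xi_1,\xi_2)] &= \ee\!\left[G(T_1)^2\indd{T_1 < T_2, T_1 < \Tmax}\right] = \prob[T_1<T_2,\, T_1<\Tmax,\, T_1<D_1\wedge D_2],\\
\ee[k_N(\xi_1,\xi_2)] &= \prob[S(T_1;\dZ_1)<S(T_1;\dZ_2),\, T_1<T_2,\, T_1<\Tmax,\, T_1<D_1\wedge D_2].
\end{align*}
Provided the denominator limit is strictly positive (a non-degeneracy condition on the joint distribution of $(\dZ,T,D)$, implicit in the problem being well posed), the continuous mapping theorem applied to the ratio yields the stated almost sure convergence.

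The only real subtlety lies in the identification of the limit, not in the application of the U-statistic SLLN: the factor $G(T_1)^2$ emerging from the two censoring indicators shows that $\tdc$ implicitly conditions on $\{T_i<D_i\wedge D_j\}$, which is precisely why its limit differs from the Antolini target in \eqref{eqn:ctd_antolini} and motivates the IPCW correction that leads to $\tdu$ in the next subsection.
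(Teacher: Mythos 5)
Your proposal is correct and follows essentially the same route as the paper's proof: symmetrise the numerator and denominator kernels, invoke the strong law of large numbers for U-statistics (the paper cites Serfling's Theorem A, i.e.\ Hoeffding's SLLN) using boundedness by $1$ for integrability, and conclude with the continuous mapping theorem on the ratio. Your explicit computation of the limit via conditioning to extract the factor $G(T_1)^2$, and your remark on the implicit non-degeneracy of the denominator, are slightly more detailed than the paper's presentation but reach the identical limit.
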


Time-dependent concordance is designed to evaluate the model performance based on predicted survival curves at event times and summarises the overall model performance. It provides a more flexible measure which is suitable for more general survival models in which the PH assumption is not imposed. However, time-dependent concordance is strongly related to Harrell's C-index since they are developed under the same principles except for the evaluated times and the PH assumption. As a consequence, time-dependent concordance suffers from similar drawbacks as Harrell's C-index, in particular its bias increases as the censoring rate increases \citep{gerds2013estimating}. Uno's C-Index administered an IPCW approach to alleviate the dependence of Harrell's C-index on the censoring distribution. Note that the formula, the convergence, and the convergence proof of time-dependent concordance for discrete-time units are the same as the continuous-time units. We only need to consider that the realisation of $T_i, D_i$, and $X_i$ belongs to $\cT$.

\subsection{Time-Dependent Uno's C-index}\label{subsect:tdu}
To derive the formula of time-dependent Uno's C-index, we apply a different approach to the original paper of time-dependent concordance\cite{antolini2005time}. We first recall the last equality of \eqref{eqn:ctd_antolini} as follows 
\begin{equation}\label{eqn:cw}
        \text{C} = \frac{\ee\big[ \indd{S(T_i;\dZ_i)<S(T_i;\dZ_j)}\indd{T_i < T_j, T_i < \Tmax}\big]}{\ee\big[\indd{T_i < T_j, T_i < \Tmax}\big]}.
\end{equation}
Note that the expected value of $\big(\indd{T_i < D_i}\big/G(T_i) \big|T_i,\dZ_i,T_j,\dZ_j,D_j\big)$, where   $D_i$  is independent of $T_i,\dZ_i,T_j,\dZ_j$, and $D_j$, is given by
\begin{equation}\label{exp_indentity}
    \ee\bigg[\frac{\indd{T_i < D_i}}{G(T_i)} \big|T_i,\dZ_i,T_j,\dZ_j,D_j\bigg]
    =
    \frac{1}{G(T_i)}\ee\big[\indd{T_i< D_i}\big|T_i,\dZ_i,T_j,\dZ_j,D_j\big]
    =
    \frac{1}{G(T_i)}\ee\big[\indd{T_i< D_i}\big|T_i\big]
    =
    \frac{1}{G(T_i)}G(T_i)
    =
    1.
\end{equation}

Using \eqref{exp_indentity}, the numerator of the right-hand side of \eqref{eqn:cw} can be rewritten as follows:
\begin{equation}\label{eqn:exp_num_tdu}
    \ee\bigg[\indd{S(T_i;\dZ_i) < S(T_i;\dZ_j)}\indd{T_i < T_j,T_i < \Tmax} \frac{\indd{T_i < D_i}}{G(T_i)} \frac{\indd{T_i < D_j}}{G(T_i)}\bigg] = \ee\bigg[\indd{S(T_i;\dZ_i) < S(T_i;\dZ_j)}\indd{T_i < X_j,T_i < \Tmax} \indd{T_i < D_i}G^{-2}(T_i)\bigg],    
\end{equation}
where $G^{-2}(T_i)$ is the ``penalty" term at $T_i$ due to ignoring some unusable pairs. 

By the same arguments, the denominator of \eqref{eqn:cw} can also be written as
\begin{equation}\label{eqn:exp_den_tdu}
    \ee\bigg[\indd{T_i < T_j,T_i < \Tmax} \frac{\indd{T_i < D_i}}{G(T_i)} \frac{\indd{T_i < D_j}}{G(T_i)}\bigg] = \ee\bigg[\indd{T_i < X_j,T_i < \Tmax} \indd{T_i < D_i}G^{-2}(T_i)\bigg].
\end{equation}
From \eqref{eqn:exp_num_tdu} and \eqref{eqn:exp_den_tdu}, we obtain
\begin{equation}\label{eqn:C_rewrite}
    \text{C}  = \frac{\ee\big[\indd{S(T_i;\dZ_i) < S(T_i;\dZ_j)}\indd{T_i < X_j,T_i < \Tmax} \indd{T_i < D_i}G^{-2}(T_i)\big]}{\ee\big[\indd{T_i < X_j,T_i < \Tmax} \indd{T_i < D_i}G^{-2}(T_i)\big]}.
\end{equation}
Therefore, following \eqref{eqn:C_rewrite}, we finally propose a new estimator of $\text{C}$ called time-dependent Uno's C-index as follows:
\begin{equation}\label{eqn:tdu}
	\tdu=\frac{\sum_{i\neq j}^n \indd{T_i < D_i} \indd{S(T_i;\dZ_i) < S(T_i;\dZ_j)} \indd{T_i < X_j,T_i < \Tmax}\widehat{G}_n^{-2}(T_{i})} {\sum_{i\neq j}^n \indd{T_i < D_i}\indd{T_i < X_j,T_i < \Tmax}\widehat{G}_n^{-2}(T_{i})}.
\end{equation}
Note that $\tdu$ can be also seen as the weighted version of the time-dependent concordance, $\tdc$. The convergence of $\tdu$ is stated in Theorem~\ref{theo:conv_tdu} and the proof is given in Appendix~\ref{app1.1b}.

\begin{theorem}[Convergence of Time-dependent Uno's C-index]\label{theo:conv_tdu}
Assume conditions (R1-R2) and that there exists $\epsilon > 0$ such that $\widehat G_n, G \in \mathcal{G}_\epsilon$, where
\begin{equation}\label{eqn:G_epsilon}
\mathcal{G}_{\epsilon} := \big\{ g:[0, T_{max}) \to [\epsilon, 1]\ \big|\  \text{$g(0) = 1$ and $g$ is non-increasing and right-continuous} \big\}.
\end{equation}

If $\sup_{t \in [0, T_{max})} |\widehat G_n(t) - G(t)| \to 0$ a.s. as $n \to \infty$, then
	\begin{equation*}
		 \tdu \overset{a.s.}{\to}  \text{C}, \quad n \to \infty.
	\end{equation*}
\end{theorem}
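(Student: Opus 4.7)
\noindent\textbf{Proof plan for Theorem~\ref{theo:conv_tdu}.}
The plan is to recast $\tdu$ as a ratio of two two-sample U-statistics whose kernels depend on an unknown nuisance function $g \in \mathcal{G}_\epsilon$, invoke a uniform-in-$g$ strong law of large numbers for U-statistics in the sense of Nolan and Pollard, and finally plug in $g = \widehat G_n$ using $\sup_t |\widehat G_n(t) - G(t)| \to 0$ a.s.\ together with the continuity of the limit U-statistic in $g$.

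To set up this framework, let $W_i = (\dZ_i, T_i, D_i, X_i)$ and, for $g \in \mathcal{G}_\epsilon$, define the (symmetrised) kernels
\begin{align*}
h_{\mathrm N}(W_i, W_j; g) &= \indd{T_i < D_i}\,\indd{S(T_i;\dZ_i) < S(T_i;\dZ_j)}\,\indd{T_i < X_j,\, T_i < \Tmax}\,g^{-2}(T_i), \\
h_{\mathrm D}(W_i, W_j; g) &= \indd{T_i < D_i}\,\indd{T_i < X_j,\, T_i < \Tmax}\,g^{-2}(T_i),
\end{align*}
with corresponding averages $U_n^{\bullet}(g) = \tfrac{1}{n(n-1)}\sum_{i \neq j} h_{\bullet}(W_i, W_j; g)$ for $\bullet \in \{\mathrm N, \mathrm D\}$. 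Then $\tdu = U_n^{\mathrm N}(\widehat G_n)/U_n^{\mathrm D}(\widehat G_n)$, and by \eqref{eqn:C_rewrite}, $\mathrm C = U^{\mathrm N}(G)/U^{\mathrm D}(G)$, where $U^{\bullet}(g) = \ee[h_{\bullet}(W_1, W_2; g)]$. Note that $g \in \mathcal{G}_\epsilon$ immediately yields the uniform envelope $|h_\bullet| \le \epsilon^{-2}$.

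The next step is to argue that the kernel classes $\mathcal{H}^{\bullet} = \{h_{\bullet}(\cdot,\cdot;g) : g \in \mathcal{G}_\epsilon\}$ are Euclidean (VC-type) in the sense of Nolan and Pollard: non-increasing right-continuous functions mapping into $[\epsilon,1]$ form a VC-subgraph class, the map $g \mapsto g^{-2}$ preserves this (since $x \mapsto x^{-2}$ is monotone on $[\epsilon, 1]$), and pointwise products with the fixed VC-indicator classes $\indd{T_i < D_i}$, $\indd{S(T_i;\dZ_i) < S(T_i;\dZ_j)}$, $\indd{T_i < X_j, T_i < \Tmax}$ remain Euclidean with envelope $\epsilon^{-2}$. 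The Nolan--Pollard uniform SLLN for U-statistics then delivers
\begin{equation*}
\sup_{g \in \mathcal{G}_\epsilon} \bigl|U_n^{\bullet}(g) - U^{\bullet}(g)\bigr| \overset{a.s.}{\to} 0, \qquad \bullet \in \{\mathrm N, \mathrm D\}.
\end{equation*}
Continuity of $g \mapsto U^{\bullet}(g)$ in sup-norm on $\mathcal{G}_\epsilon$ is straightforward by dominated convergence with dominator $\epsilon^{-2}$: if $\|g_n - G\|_\infty \to 0$ then $\|g_n^{-2} - G^{-2}\|_\infty \to 0$ since both are bounded below by $\epsilon$. Combining the uniform SLLN with the hypothesis $\sup_t |\widehat G_n(t) - G(t)| \to 0$ a.s.\ yields $U_n^{\bullet}(\widehat G_n) \to U^{\bullet}(G)$ a.s.\ for both $\bullet$. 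By \eqref{eqn:exp_den_tdu}, $U^{\mathrm D}(G) = \prob[T_i < T_j,\, T_i < \Tmax] > 0$ under natural assumptions on $F$, so the continuous mapping theorem for the ratio concludes $\tdu \to \mathrm C$ a.s.

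\noindent\textbf{Main obstacle.} The delicate point is verifying the Euclidean/VC property of $\mathcal{H}^{\mathrm N}$ and $\mathcal{H}^{\mathrm D}$ uniformly over $\mathcal{G}_\epsilon$, which is what enables the Nolan--Pollard uniform convergence to be applied with the data-dependent plug-in $\widehat G_n$ (itself estimated from the same sample). The lower bound $\epsilon$ secured by Remark~\ref{rem:uno} is indispensable here: without it, the envelope $g^{-2}$ would be unbounded, both the uniform SLLN and the sup-norm continuity argument would break, and one could not decouple the randomness of $\widehat G_n$ from that of the U-statistic averaging.
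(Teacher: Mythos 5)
Your overall architecture is exactly the paper's: view the numerator and denominator as U-statistics indexed by a nuisance function $g\in\mathcal{G}_\epsilon$, establish a Nolan--Pollard uniform strong law over $\mathcal{G}_\epsilon$, plug in $\widehat G_n$ using the sup-norm convergence $\widehat G_n\to G$ together with dominated convergence for the limit functional, and finish with the continuous mapping theorem for the ratio. That decomposition (sup-over-$g$ deviation plus continuity of $g\mapsto F\otimes F(\bar h^g)$) is precisely how the paper's proof of Theorem~\ref{theo:conv_tdu} is organised, with Lemma~\ref{lemma:conv_num_tdu} and Corollary~\ref{corollary:conv_den_tdu} supplying the uniform laws.

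There is, however, a genuine gap in the one step you yourself flag as the main obstacle. The class of non-increasing functions $[0,T_{max})\to[\epsilon,1]$ is \emph{not} a VC-subgraph (Euclidean) class: given points $(t_1,s_1),\dots,(t_k,s_k)$ with $t_1<\dots<t_k$ and $s_1>\dots>s_k$ in $(\epsilon,1)$, for every subset $A$ one can build a non-increasing step function passing above $s_i$ exactly for $i\in A$, so the subgraphs shatter arbitrarily large finite sets. Consequently the covering numbers of $\mathcal{G}_\epsilon$ (and of $\{g^{-2}\}$) are not polynomial in $1/\delta$; they grow like $\exp(C/\delta)$, and the "monotone $\Rightarrow$ VC-subgraph $\Rightarrow$ Euclidean, preserved under products" chain does not go through. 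The conclusion you need is nevertheless true, because Nolan and Pollard's Theorem~7 only requires $\log N_1(\delta,\T_n,\cdot)=o_p(n)$, $\log N_1(\delta,F_n\otimes F,\cdot)=o_p(n)$ and $N_1(\delta,F\otimes F,\cdot)<\infty$ --- far weaker than a Euclidean bound. The paper verifies these directly: exploiting monotonicity, it covers $\mathscr{W}_\epsilon=\{g^{-2}\}$ on the $2n$ sample points by counting non-decreasing assignments of $2n$ points to $M_{\delta^*}$ value-bins, giving $N_1\le(2n+1)^{M_{\delta^*}}$ and hence $\log N_1=O(\log n)=o(n)$; and for the population measure it runs a separate two-step discretisation of $G$ into large atoms and small-probability intervals to show finiteness. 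So your proof plan is repairable, but the repair is exactly the explicit covering computation the paper carries out, not an appeal to VC/Euclidean preservation properties. (Your remaining steps --- the $\epsilon^{-2}$ envelope, sup-norm continuity of $g\mapsto U^{\bullet}(g)$ by dominated convergence, positivity of the denominator, and the continuous mapping theorem --- match the paper and are fine.)
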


\begin{remark}
The assumption $\sup_{t \in [0, T_{max})} |\widehat G_n(t) - G(t)| \to 0$ a.s. as $n \to \infty$ is satisfied when $\widehat G_n$ is the empirical cumulative distribution function, see Glivenko-Cantelli Theorem\cite[Thm.~20.6]{Billingsley}. It is also implied by the following weaker condition: $\widehat G_n$ are right-continuous and non-increasing random functions such that $\widehat G_n(t) \to G(t)$ a.s. for every $t \in [0, T_{max})$. Indeed, taking a countable set $D \subset [0, T_{max})$ which is dense in $[0, T_{max})$, we have $\sup_{t \in D}|\widehat G_n(t) - G(t)| \to 0$ a.s. as $n \to \infty$. This convergence is extended to the whole interval $[0, T_{max})$ by the right-continuity.
\end{remark}

\begin{remark}
    The formula of time-dependent Uno's C-index for discrete-time units is the same as for continuous-time units given in \eqref{eqn:tdu}. However, we need to redefine the family of functions $\mathcal{G}_\epsilon$ in \eqref{eqn:G_epsilon}
    as follows
    \begin{equation}\label{eqn:G_discrete}
        \widetilde{\mathcal{G}}_{\epsilon}=\{g:\{1,\cdots,\Tmax-1\} \to [\epsilon,1]\ \big|\  \text{$g(0) = 1$ and $g$ is non-increasing} \big\}.
    \end{equation}
    and the assumption about the convergence of $\widehat G_n$ to $G$ reads: $\widehat G_n(t) \to G(t)$ a.s. for every $t \in \{1,2, \cdots, T_{max}-1\}$ (which is clearly satisfied when $\widehat G_n$ is the empirical cumulative distribution function). Comments about the proof are given in Appendix~\ref{app1.1c}.
\end{remark}

\begin{remark}\label{rem:epsilon}
The assumption that $\widehat G_n, G \in \mathcal{G}_\epsilon$ for $\epsilon > 0$ is key in the proof of the theorem. Without it, we would not be able to apply the uniform law of large numbers for U-statistics. A careful inspection of the proof would reveal that one could replace it with an assumption that $G(T_{max}-) > 0$. In the estimator, one would then use functions $\hat g_n (t) = \max(\widehat G_n(t), a_n)$ for an appropriately chosen sequence $a_n \to 0$. Incorporation of this generalisation to the proof would make it unnecessarily long and would not improve significantly its applicability to real-life problems, so we decided to omit it. Tuning of $\epsilon$ as discussed below seems to us a much more practical approach.
\end{remark}

\begin{remark}
In practice, we need to choose the lower bound $\epsilon$ in $\mathcal{G}_\epsilon$ \emph{a priori} as the true censoring distribution is unknown. If we choose $\epsilon$ too large, it will tune down the effect of $G$ while $\epsilon$ too small will increase the variance of the estimator by allowing an excess effect of underestimation of $\widehat G_n$ on the value of the statistic. Small values of $\widehat G_n$ blow up the contributions of the respective individuals to the measure.
\end{remark}

\begin{remark}
$G(T_i)$ accounts for the skewed distribution of event times $T_i$ due to censoring. Smaller values of $G(T_i)$ indicate smaller probability of seeing individuals with this event time compared to the uncensored case. Hence, their contributions to the statistic must be inflated compared to individuals with large $G(T_i)$.
\end{remark}

\subsection{Relationship between Time-dependent Uno's C-index and Time-dependent Concordance}\label{sec2.3}

Exploring the relationship between the IPCW (weighted) and the corresponding non-IPCW (unweighted) discrimination measures can shed light on how to quantify the difference between those two measures. 

We rewrite $\text{C}$ as follows
\begin{equation}\label{eqn:tdu_tdc_rlt}
	\begin{aligned}
		\text{C}
		&=
		\prob\big[S(T_i;\dZ_i) < S(T_i;\dZ_j) \big|T_{i} < T_{j}, T_{i}  < \Tmax\big] \\
		&=
		\prob\big[S(T_i;\dZ_i) < S(T_i;\dZ_j) \big|T_{i} < T_{j}, T_i < \Tmax, T_i<D_i \wedge D_j \big] \prob\big[T_i < D_i \wedge D_j \big| T_{i} < T_{j}, T_i < \Tmax\big]\\
		&\quad+
		\prob\big[S(T_i;\dZ_i) < S(T_i;\dZ_j) \big|T_{i} < T_{j}, T_i < \Tmax, T_i\geq D_i \wedge D_j\big]\big(1-\prob\big[ T_i < D_i \wedge D_j \big| T_{i} < T_{j} , T_i < \Tmax \big]\big).
	\end{aligned}
\end{equation}
Recall also that
\begin{equation*}
	\text{C}^{td}=\prob\big[S(T_i;\dZ_i) < S(T_i;\dZ_j) \big|T_{i} < T_{j}, T_i < \Tmax, T_i<D_i \wedge D_j \big].
\end{equation*}
Denote
\[
\theta = \prob\big[T_i < D_i \wedge D_j \big| T_{i} < T_{j}, T_i < \Tmax\big] = \ee \big[ G^2(T_i) \big| T_{i} < T_{j}, T_i < \Tmax\big].
\]
Then \eqref{eqn:tdu_tdc_rlt} can be rewritten as
\begin{equation}\label{eqn:tdu_tdc_rlt2}
	\text{C} = \theta \text{C}^{td} + (1 - \theta) \tilde{\text{C}},
\end{equation}
where 
\begin{equation}\label{eqn:C_tilde}
	\tilde{\text{C}} = \prob\big[S(T_i;\dZ_i) < S(T_i;\dZ_j) \big|T_{i} < T_{j}, T_i < \Tmax, T_i\geq D_i \wedge D_j\big].
\end{equation}

As we can see in \eqref{eqn:tdu_tdc_rlt2}, the relationship depends not only on $\theta$ but also on $\tilde{\text{C}}$. $\tilde{\text{C}}$ is the concordance probability from some ``unusable'' pairs of individuals that are ignored when we estimate $\text{C}$ and $\text{C}^{td}$. Our proposed estimator $\tdu$ does not estimate $\tilde{\text{C}}$ since it only evaluates usable pairs following the rule proposed by Harrell's C-index \citep{harrell1982evaluating}. However, the influence of   $\tilde{\text{C}}$ to $\text{C}$ can be quantified from \ref{eqn:tdu_tdc_rlt2}, especially the value of $\theta$ representing the probability that the event time of individual $i$ is uncensored given that $T_i<T_j$ and $T_i<\Tmax$. 

Based on all possible values of $\theta$, the effects of $\tilde{\text{C}}$ on $\text{C}$ can be explained as follows:
\renewcommand\theenumi{\roman{enumi}}
\begin{enumerate}
	\item $\theta=1$, which means that all individuals are uncensored. This situation is where  $\tdu$ is the same as $\tdc$. Their values furthermore represent the true values of both discrimination measures. In this case, $\tilde{\text{C}}$ does not have any contribution to $\text{C}$ as $(1-\theta)=0$.
	\item $\theta \in (0,1)$. This situation is the most common case in survival analysis since survival data usually contains censored observations. In practice, $\tdu$ has different values to $\tdc$, where $\tdu$ is closer to their true values. In this case, our prediction shows that the contribution of $\tilde{\text{C}}$ is high when $\theta$ is close to zero. Otherwise, its contribution to $\text{C}$ is minimal.
	\item $\theta=0$. This case is the most extreme in survival analysis because all survival data are censored. Most survival models cannot be fitted to those datasets since they rely on at least one uncensored observation. In this case, $\text{C}$ is the same with $\tilde{\text{C}}$, which means that we cannot estimate the value of $\text{C}$ in practice. In other words, $\tdu$ is undefined.
\end{enumerate}

\section{Simulation Studies}\label{sec3}

\subsection{Evaluated Non-Linear Survival Models} \label{subsec:nonlinear_model}

In this section, we explore properties of Uno's C-index, Time-Depentent Concordance and the proposed model evaluation measures on simulated data. As our modelling tool, we use Nnet-survival. Since Nnet-survival is a discrete-time non-linear survival model, we need to generate discrete event and censoring times. Two standard methods for generating synthetic discrete-time survival data are: (1) simulating directly the discrete-time data using discrete-time survival models\cite{kvamme2021continuous}\cite{schmid2018discrimination} and (2) discretising the generated continuous-time survival data into a number of periods\cite{gensheimer2019scalable}. This study applies both approaches depending on the simulation setting. We perform three simulations under different settings in order to demonstrate the measures' properties. In all simulations, we generate one data set of size $n_{\text{train}}=1000$ for model training and $100$ independent test data sets also of size $n_{\text{test}}=1000$. The $100$ replications are used to produce the plots of the measures' estimators. We present our main results using boxplots, in which the horizontal lines within the boxes are medians. 
\subsection{Simulation 1: PH Data with Various Percentages of Censoring}\label{subsec:ph_data}

In Simulation 1 we evaluate the behaviour of the discussed performance measures for a fixed model under different percentages of censored observations.
We generate right-censored continuous event times by inverting the cumulative hazard function of a CPH model\cite{bender2005generating}, where the baseline hazard rate follows Gompertz distribution with scale parameter ($\alpha$) and shape parameter ($\gamma$). In particular, the continuous event times are generated using the following formula: 
\begin{equation}\label{eqn:gompertz_time}
	T=\frac{1}{\alpha}\log\left(1-\frac{\alpha \log(v)}{\lambda\exp{(\boldsymbol{\beta} \dZ)}}\right),
\end{equation}
where $v \sim U(0,1)$, and  $\boldsymbol{\beta}'=(\beta_1,\beta_2,\beta_3,\beta_4,\beta_5)'$ is the coefficient vector of baseline covariates $\dZ=(Z_1,Z_2,Z_3,Z_4,Z_5)'$. Event times are generated based on \eqref{eqn:gompertz_time} with $\alpha=0.001$ and $\lambda=0.1$. For the CPH model, we use five covariates observed at baseline: three are generated from Bernoulli distributions ($Z_1 \sim \text{Ber}(1,0.1),~ Z_2 \sim \text{Ber}(1,0.5),~ Z_3 \sim \text{Ber}(1,0.3)$) and two from normal distributions ($Z_4 \sim \mathscr{N}(0,1), ~Z_5 \sim \mathscr{N}(0,0.5)$), where $\beta_1=3,~\beta_2=0.5,~\beta_3=0.8,~ \beta_4=0.25,$ and $\beta_5=0.95$ are their respective effect sizes. 

Follow-up time ends at $\Tmax=70$ such that all individuals with event times greater than or equal to $\Tmax$ are administratively censored. Denote by $n_{\text{surv}}$ the number of individuals surviving at least until $\Tmax$ and by $N$ the total number of simulated times. To obtain censoring within $[0,\Tmax)$, $(N-n_{\text{surv}})$ independent right censored times are generated from $\text{Weibull}$ distribution with three parameters corresponding to shape, scale, and location. The values of these parameters are fine-tuned to achieve the desired proportions of censored individuals. The $(N-n_{\text{surv}})$ observed times in $[0,\Tmax)$ are obtained by taking the minimum between the event times and the censoring times. We discretise all observed times into $11$ periods using $\mathcal{D}_1=\{0, 5, 10, 15, 20, 25, 30, 40, 50, 60, 70, \infty \}$  as the set of discretisation points. The last boundary points (i.e. $70$ and $\infty$) group all observed times greater than or equal to 70 as the last period $\Tmax$. We fit the Nnet-survival architecture (`Sim.1' column of Table~\ref{table:tab1_app} in Appendix~\ref{app2}) to a single train data set with 0\% censoring rate. Finally, the model discrimination performance is evaluated on the 100 independent test data with various censoring rates, i.e. 0\%, 4\%, 25\%, 45\%, 62\%, and 75\%.

With regards to censoring induced bias, Fig.~\ref{fig1} shows the performance measures evaluated on test samples with varying percentages of censored data. It is clear that $\tdu$ is unbiased. On the other hand, $\tdc$ is showing a positive (in the sense that they indicate better performance) bias in the presence of censoring. The higher the percentage of censored observations, the more pronounced the bias. Unsurprisingly, as the censoring rate increases, the stability of all measures decreases as shown by the increasing standard deviations. 
\begin{figure}[t]
\centering
\includegraphics[width=275pt,height=14pc]{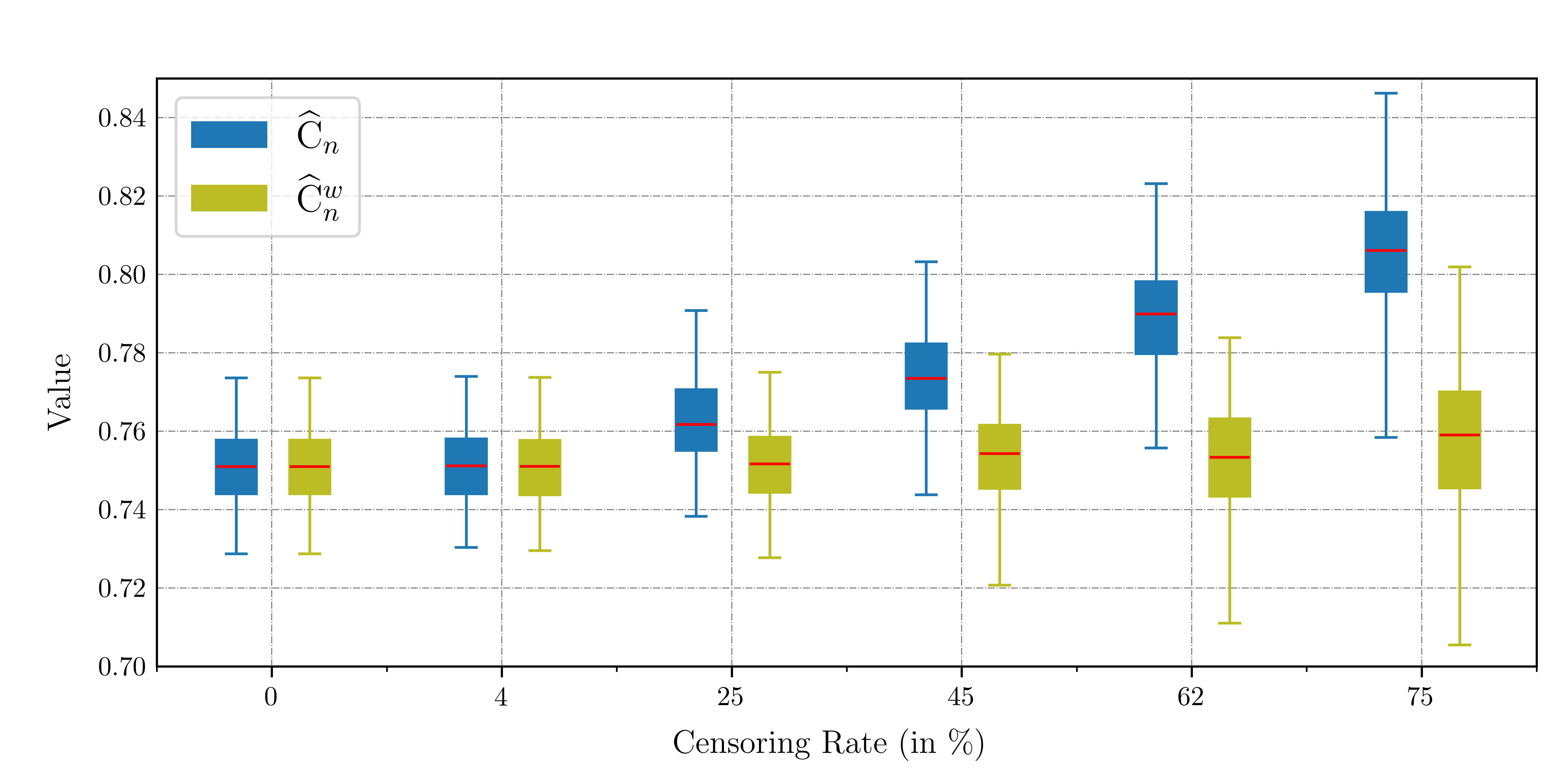}
\caption{The bias of time-dependent concordance of an Nnet-survival architecture (`Sim.1' column of Table~\ref{table:tab1_app} in Appendix~\ref{app2}) over six different censoring rates in the test data. They were estimated based on the 100 independent test data ($n_\text{test}$=1000) from a fixed fully uncensored train data ($n_\text{train}$=1000).}
\label{fig1}
\end{figure}

\subsection{Simulation 2: Discrete-time Survival with Fixed Censoring Proportions}\label{sec:sim2}

The aims of this simulation are: (1) to demonstrate that the bias due to censoring is not always positive (meaning an improved measure), and (2) to study the effect of the estimation of $G$. There are situations where increasing censoring causes the performance indices to deteriorate. A plausible explanation, which is implemented in this experiment, is due to the model's varying discrimination ability at different event times and its interaction with the distribution of censoring times. One crucial feature of time-dependent Uno's C-index is its dependency on the estimation quality of $G$. In this experiment, we compare the effect of $G$ estimated with the KM estimator and the exact $G$ obtained from the used population distribution of censoring times on the bias of time-dependent Uno's C-index.

We generate event times based on 
\begin{equation}\label{eqn:gompertz_time2}
    T=\frac{1}{\alpha}\log\left(1-\frac{\alpha \log(v)}{\lambda \exp{(5 -4Z)}}\right),
\end{equation}
where  $v \sim U(0,1)$, $\alpha=0.0005, \lambda=0.3$, and $Z \sim \text{Ber}(1,0.5)$. In this simulation setting, we have a simpler survival model in which the event times depend only on one covariate. 
The event times are discretised into $15$ periods using $\mathcal{D}_{2}$=\{0, 4, 7, 9.5, 11.5, 13, 14, 16, 17, 19, 21, 23, 25, 28, 35, $\infty$\}. While all individuals are uncensored in the train data ($n_\text{train}$=1000), we vary the censoring rate in the test data ($n_\text{test}$=2000). For the censoring distribution, we generate discrete censored times using several predetermined probabilities in each period ($p_t)$ for $t=1,\cdots,15$ such that the desired censoring rates are achieved (see Table~\ref{table:tab2_app} in Appendix~\ref{app2}). The observed times are the minimum between the event times and the censoring times, where all individuals in the last period ($\Tmax$) are administratively censored. 

\begin{figure*}[t]
\centering
\includegraphics[width=475pt,height=16pc]{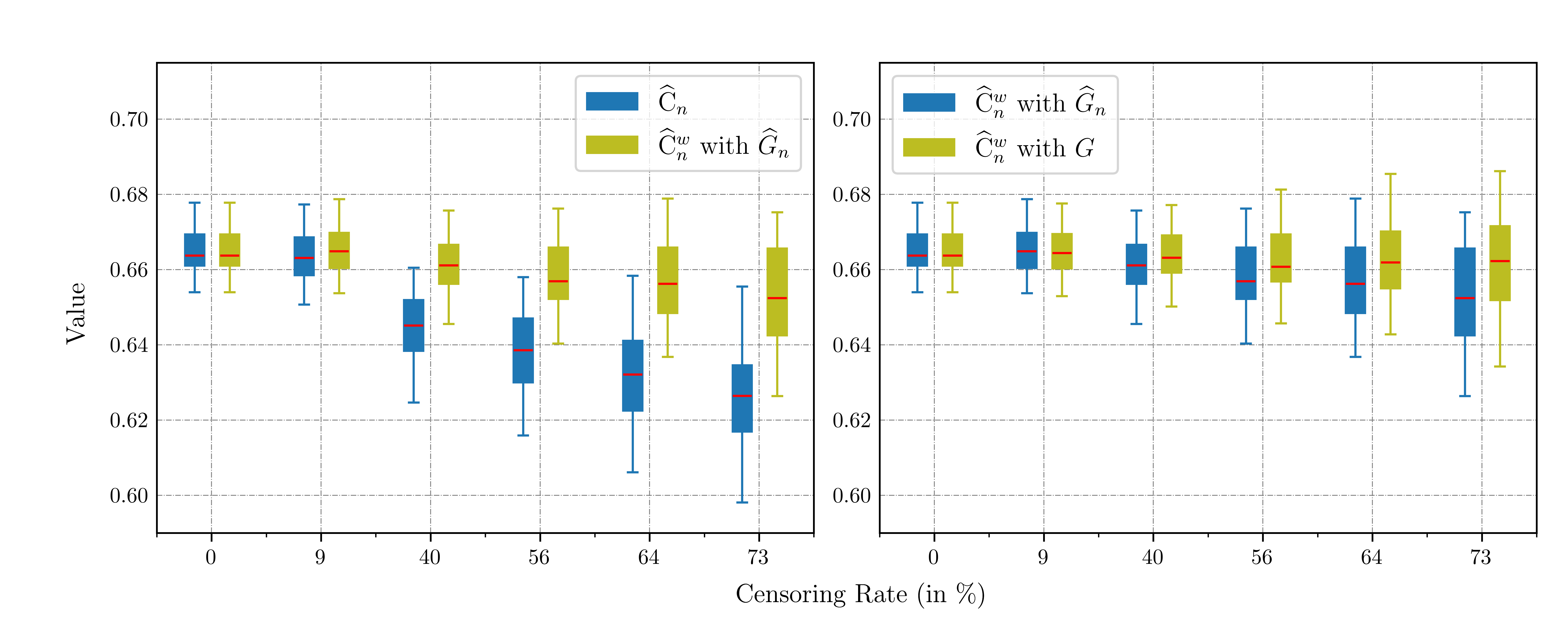}
\caption{Time-dependent concordance and time-dependent Uno's C-index of an Nnet-survival (`Sim.2' column of Table~\ref{table:tab1_app} in Appendix~\ref{app2}) over six different censoring rates in the test data. They were estimated from 100 independent test data ($n_\text{test}$=2000) from a fixed fully uncensored train data ($n_\text{train}$=1000). The figure shows the downward bias of $\tdc$ (Left) and the effect of the estimation of $G$ on $\tdu$ (Right).}
\label{fig2}
\end{figure*}

We preserve the generated train data, but in the test data, we randomly permute the event periods for a fraction of individuals whose event periods are less than or equal to $z$. We maintain the relationship between the event periods and the covariates for individuals whose event periods are greater than $z$. We choose $z=7$ for our illustration, resulting in around 70\% individuals whose event periods are randomly permuted. This setting aims to decrease the prediction accuracy of the models for shorter follow-up. The combination of the chosen $p_t$ and the covariate permutation with $z=7$ in the test data causes the downward bias of time-dependent concordance. This illustration provides a plausible explanation for the behaviour observed for heart failure data, namely downward bias of time-dependent concordance (see Section~\ref{app:nnet}).

We fit Nnet-survival architecture in `Sim.2' column of Table~\ref{table:tab1_app} in Appendix~\ref{app2} to a single train data with 0\% censoring rate. Then, we evaluate the model performance on 100 independent test data with varying censoring rates (i.e. 0\%, 9\%, 40\%, 56\%, 64\%, and 73\%). We employ two approaches in computing the censoring weights for time-dependent Uno's C-index: (1) $\widehat{G}_n$, and (2) $G$. The first approach uses the KM estimator, which is common throughout this paper, meanwhile in the second approach, we apply $G$ obtained from the population probability distribution of censoring $p_t$ for each censoring rate.

Fig.\ref{fig2} shows the downward bias of $\tdc$ (left panel) and the effect of the estimation of $G$ (right panel). As we can see in the left panel of the figure, $\tdc$ decreases as the censoring rate increases. The downward bias may happen depending on the fitted model and the prediction quality. Due to the permutation applied to some individuals in the test data, we have decreased the prediction ability of the models. On the other hand, $\tdu$ (with $\widehat{G}_n$) is much closer to its ``true'' value, i.e the value at 0\% censoring. In the right panel of Fig.\ref{fig2}, $\tdu$ with $\widehat{G}_n$ has larger bias than $\tdu$ with the true $G$. This result confirms that misspecification error in the estimation of $G$ affects the bias of IPCW measures, such as $\tdu$.

\subsection{Simulation 3: Non PH data}\label{sec:sim3}

The main purpose of this simulation is to test the performance of Uno's C-index and time-dependent Uno's C-index in either the presence or the absence of proportional hazards. To generate non-PH data, we modify slightly the scenario in Section~\ref{subsec:ph_data}. The PH assumption is violated by assuming parameter $\alpha$ in \eqref{eqn:gompertz_time} for the event times to vary depending on the covariates. In particular, we define
\begin{equation*}
    \alpha = 
    \begin{cases}
    0.1 & \text{if $Z_4Z_5 \leq 0$ } \\
    0.4 & \text{otherwise},
    \end{cases}
\end{equation*}
where we recall that the event time depends on five covariates, i.e. $Z_1, Z_2, Z_3, Z_4$ and $Z_5$ defined in Simulation 1. Because PH assumption is usually violated in the longer follow-up, we change $\Tmax$ from 70 in Simulation 1 to $150$. For the other parameters, such as $\boldsymbol{\beta}'=(\beta_1,\beta_2,\beta_3,\beta_4,\beta_5)'$, the values were kept the same as for the PH data generation setting in Simulation 1. The observed times are discretised into $16$ periods using $\mathcal{D}_{3}$=\{0, 2, 3, 3.5, 3.75, 4, 5, 7.5, 10, 15, 20, 30, 50, 80, 90, 150, $\infty$\}. According to the conducted PH assumption Schoenfeld residuals test, non-PH was significantly present in the generated data. 

There is a single train data set $(n_{\text{train}}=1000)$ and 100 independent test data $(n_{\text{test}}=1000)$ for each data type, i.e. PH and non-PH data. We fitted Nnet-survival with the architecture given in the `Sim.1' column of Table~\ref{table:tab1_app} in Appendix~\ref{app2} to the PH data. For the non-PH data, we used Nnet-survival with the architecture described in `Sim.3' column of Table~\ref{table:tab1_app} in Appendix~\ref{app2}. Then, we computed $\uno(t)$ for each $t \in \{1,\cdots,\Tmax-1\}$ from the 100 independent test data. We only present the model evaluation results of Nnet-survival over \{1,$\cdots$,$\Tmax -1$\} periods using $\uno(t)$ since $\har(t)$ and $\uno(t)$ are exactly the same when the censoring rate is $0\%$.

We present the results of Simulation 3, where the discrimination performance of Nnet-survival at all $(\Tmax-1)$ periods based on $\uno(t)$ and $\tdu$ in the PH data (Fig.~\ref{figc1} in Appendix~\ref{app3}) and non-PH data (Fig.~\ref{fig3}) is reported, where the data generating mechanism and the discretisation points for the PH data follow the scenario in Section~\ref{subsec:ph_data}. As $\har(t)=\uno(t)$ and $\tdc=\tdu$ when censoring rate is $0\%$, we omit $\har(t)$ and $\tdc$ in those figures. As we see in Fig.~\ref{figc1}, Uno's C-index $\uno(t)$ and time-dependent Uno's C-index $\tdu$ give an almost the same result of about $75\%$. By design, $\uno(t)$ is evaluated in each period, whereas $\tdu$ gives a single value for all periods. This is expected as the simulated data are generated through and are a good fit to a proportional hazards model. PH property maintains rank of outputs such that each usable pair has constant contribution over the follow-up.

This does not remain the case for the plots of Fig.~\ref{fig3}, where the data come from a non-proportional hazard model. Although $\tdu$ gives a value of around $86\%$ (right panel), it is clear that $\uno(t)$ (left panel) gives a very different and unstable picture. Depending on which period is used to acquire the model prediction for survival probability, the index changes significantly. Note that both indices are evaluating the same model. Uno's C-index may oscillate when the PH assumption is violated due to the rank reversion. The reported model performance over the follow-up is inconsistent, resulting in the overall performance measures, e.g. $\tdu$, being preferable in such situation. 

From Fig.~\ref{fig3}, we can also see that $\tdu$ is mainly higher than $\uno(t)$ for each $t \in \{1,\cdots,\Tmax-1\}$. Because $\uno(t)$ is calculated based on usable pairs of the predicted survival curves at $t$ of all individuals, $\uno(t)$ should show false levels of accuracy when there are reversions of the predicted survival curves over follow-up. On the other hand, $\tdu$ utilises more specific usable pairs over the follow-up as given by \eqref{eqn:tdu}. As a result, $\tdu$ has predominantly a higher value than $\uno(t)$.

\begin{figure*}[t]
\centering
\includegraphics[width=400pt,height=16pc]{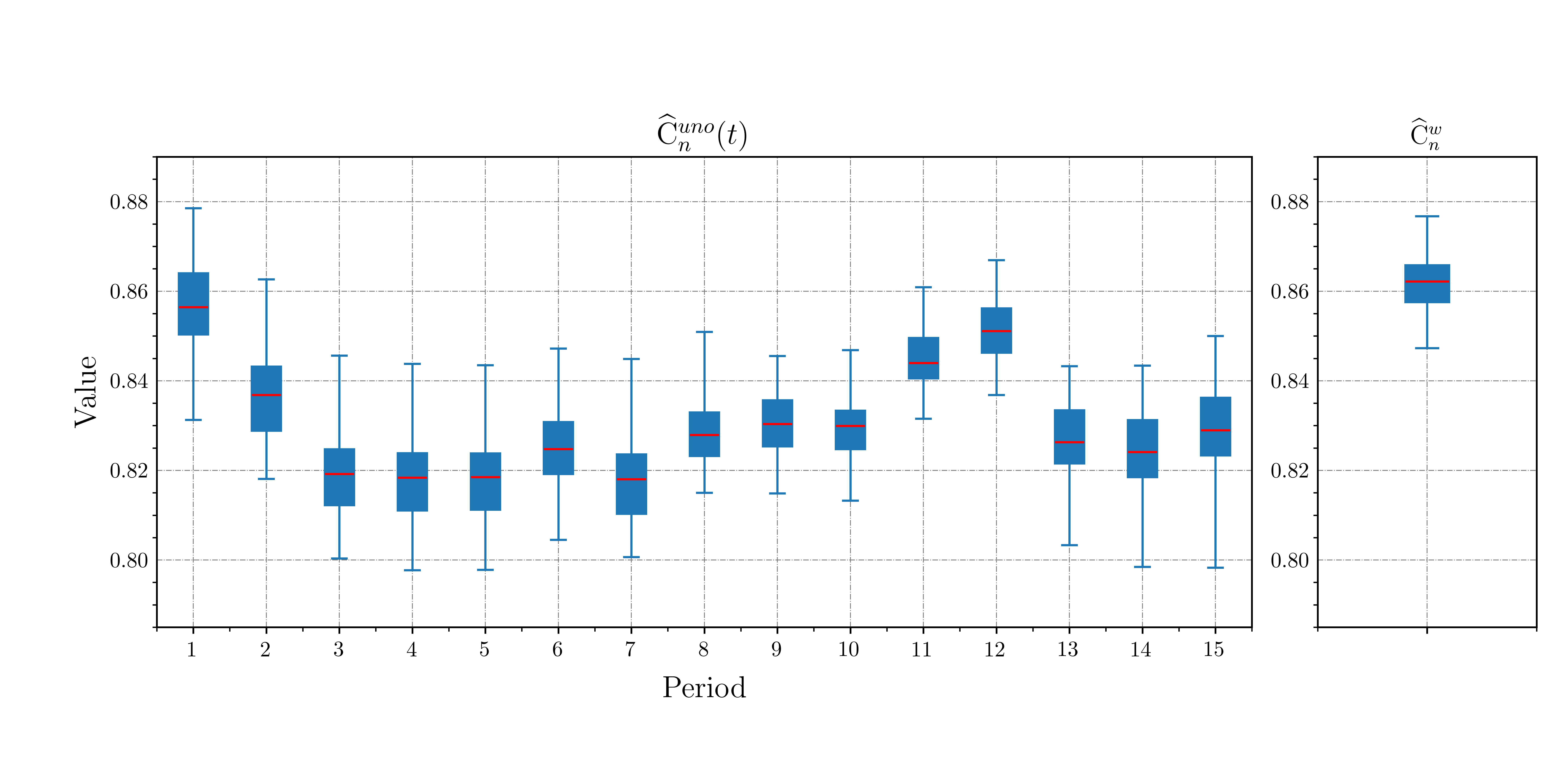}
\caption{The results of Simulation 3 aimed to show Uno's C-index (Left) and time-dependent Uno's C-index (Right) from an Nnet survival architecture (`Sim.3' column of Table~\ref{table:tab1_app} in Appendix~\ref{app2}) fitted to the non-PH data over \{1, $\cdots$, $\Tmax -1$\}. They were computed on 100 independent fully uncensored test data ($n_\text{test}$=1000) from a fixed fully uncensored train data ($n_\text{train}$=1000).}
\label{fig3}
\end{figure*}

\begin{remark}\label{rem:nph_data}
    Although $\har(t)$ and $\uno(t)$ may be easily employed by common standard survival analysis packages either in R or Python programming, we need to carefully check whether PH assumption holds in the data. For instance, in the original paper of Nnet-survival \cite{gensheimer2019scalable}, the authors applied $\har(t)$ for assessing model performance at one period (one-year prediction) even though the PH assumption was violated in their real data.
\end{remark}

\section{Real Data Applications}\label{sec4}		

The primary objective of this section is to study how time-dependent Uno's C-index and time-dependent concordance behave when assessing the performance of non-linear survival models in real-world data. We focus on two practical aspects: a) the performance of the model according to each measure and b) tuning model hyper-parameters using the measures. We work with two real data sets :
\begin{enumerate}[1.]
    \item Heart failure (HF) data from Ahmad et al. \cite{ahmad2017survival} (see also Chicco and Jurman \cite{chicco2020machine} for further analysis). The HF data recorded the time to death due to heart failure of 299 patients and their baseline clinical information. The response variable is the time until the death of patients (in days), and eleven available clinical features were used as covariates (i.e. six continuous and five categorical variables). The follow-up is 289 days, during which 96 (32\%) patients died, and 203 (68\%) patients were censored. 
    \item TCGA mutation data, or simply TCGA data, was part of the TCGA project \citep{weinstein2013cancer} (see also \cite{kandoth2013mutational}). We extracted data from the R package \textbf{ dnet} \citep{fang2014thednet}. The data contains clinical information of 3,096 cancer patients (12 cancer types) with  19,428 baseline variables, including the time until the death of the patients (in days) during the study. We did feature selection using random survival forests (RSF) \citep{ishwaran2008random} and a Python package \textbf{scikit-learn} \citep{scikit-learn} with `eli5' and `PermutationImportance' functions. As a result, we only employed 13 independent variables as the model's covariates denoted by `CSMD3', `EGFR', `FLG', `MUC16', `MUC4', `PIK3CA', `PTEN', `TP53', `TTN', `USH2A', `Age', `Gender', and `TCGA tumor type'. The follow-up was defined from zero up to the largest recorded observed time (6,975 days), during which 2,099 (68\%) patients were censored, and 997 (32\%) patients died.  
\end{enumerate}

To obtain the training and test data for both datasets, we randomly divided the original data into 70\% training data and 30\% test data, where the censoring rates in the train and the test data were kept similar to the original data. We repeated the random division 100 times to get 100 pairs of independent train and test data. Then, the train and test data in HF and TCGA data were discretised using $\mathcal{D}_{4}$ = \{0, 14.25, 28.5, 42.75, 57, 71.25, 85.5, 99.75, 114, 128.25, 142.5, 156.75, 171, 185.25, 199.5, $\infty$\} and $\mathcal{D}_{5}$ = \{0, 74, 152.003, 234.465, 321.928, 415.039, 514.573, 621.488, 736.966, 862.497, 1000, $\infty$\}, respectively.

\subsection{Model Performance of Nnet-Survival}\label{app:nnet}

This section aims to study how $\tdu$ and $\tdc$ are evaluating the predictive performance of Nnet-survival fitted to HF and TCGA data. The used Nnet-survival architectures can be found in the columns `HF Data' and `TCGA Data' of Table~\ref{table:tab1_app}, respectively. Then, we evaluated the models' performance on their respective 100 independent test data.

Figure~\ref{fig5} shows the results of the numerical implementation. The left panel of the figure presents the values of $\tdu$ and $\tdc$ for the two datasets, where blue and yellow boxplots represent $\tdu$ and $\tdc$ from 100 test dataset , respectively. We can see from the left panel that $\tdc$ is upward (positively) biased in TCGA data. In contrast, $\tdc$ is downward (negatively) biased in HF data. To make the presentation in the left panel clearer, we also draw the boxplots of the differences between time-dependent concordance and time-dependent Uno's C-index ($\tdc-\tdu$) for each test data as given in the right panel of the figure. 

In the above, we see in practice that the non-IPCW discrimination measure, namely time-dependent concordance, can be either upward or downward biased, contrary to the positive bias (C-hacking) commonly reported in the literature (\citep{gerds2013estimating,gonen2005concordance}). We believe this behaviour is the result of a combination of different discrimination ability of the model for short and long follow-up times and the distribution of censoring times. We explored this relationship in Section \ref{sec2.3} and we demonstrated in practice that we can achieve negative bias by reducing the predictive ability of a model for short follow-up times and adjusting the censoring distribution in Simulation 2 in Section~\ref{sec3}. 

\begin{figure*}[t]
\centering
\includegraphics[width=400pt,height=14pc]{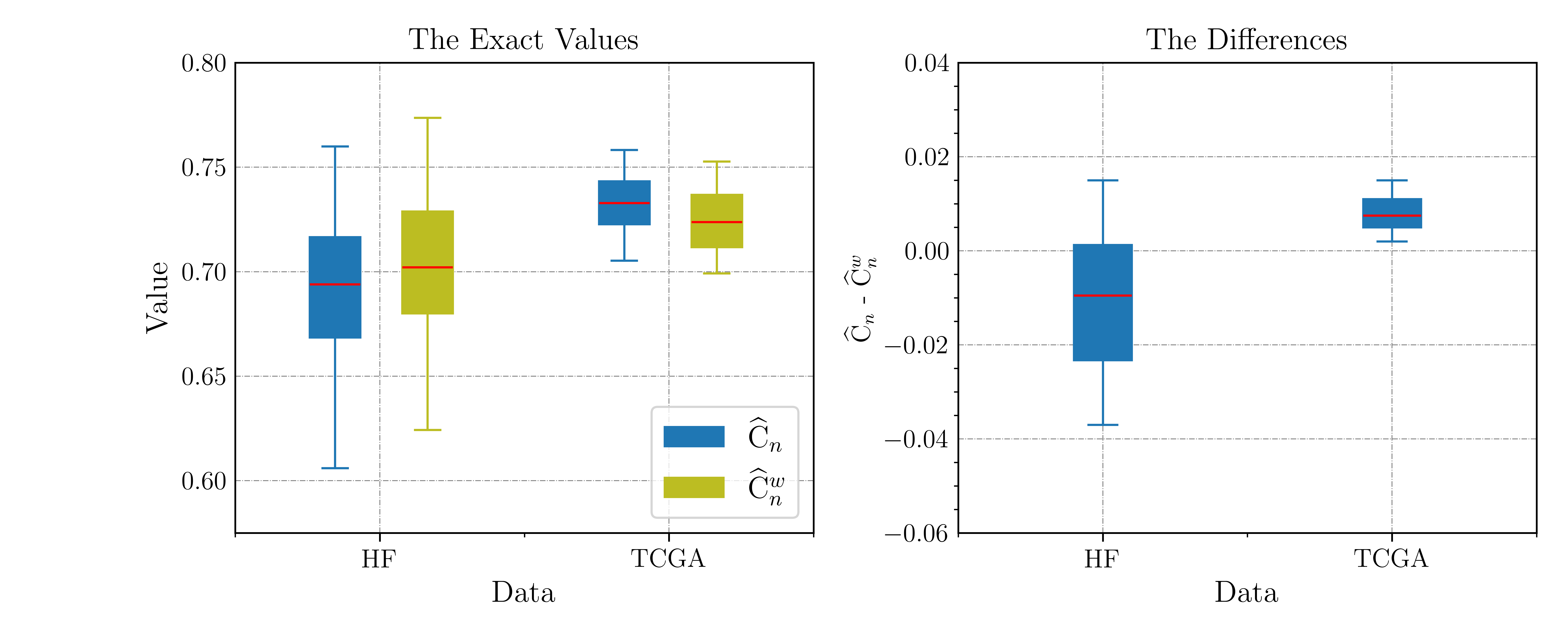}
\caption{Time-dependent concordance and time-dependent Uno's C-index (Left) and their differences (Right) of Nnet-survival architectures (`HF Data' and `TCGA Data' columns of Table~\ref{table:tab1_app} in Appendix~\ref{app2}) for HF data and TCGA data. We can clearly see the negative bias of $\tdc$ for HF data.}
\label{fig5}
\end{figure*}
%
\subsection{Tuning Minimum Node Size of Discrete-Time Random Survival Forests}\label{app:rsf}

All the numerical experiments discussed in this article evaluate the  performance of Nnet-survival models. Our proposed measures can be used for any non-linear survival model with a single event of interest and right-censored data and for completeness, this section shows the application of the proposed measures in evaluating the model performance of another commonly applied machine learning survival model, namely random survival forests \citep{ishwaran2008random}.

The random forests method was introduced by \cite{breiman2001random} and has been applied in vast research fields for prediction and classification tasks. \cite{ishwaran2008random} proposed random survival forests as an extension of random forests to handle right-censored survival data. Random survival forests was then adapted by \cite{schmid2020discrete} into the context of discrete-time units, where they called their approach as discrete-time survival forests (DTSF). In this work, we aim to investigate how our proposed measures behave in evaluating the model predictive performance of DTSF. In particular, we will assess the performance of DTSF when one of its hyper-parameters, namely minimum node size, varies while the other hyper-parameters are fixed.

In this implementation, we still used the same 100 pairs of train and test data in the previous as well as their discretisation setups. We first fitted the DTSF architectures in `HF Data' and `TCGA Data' columns of Table~\ref{table:tab3_app} to each train data of HF and TCGA data, respectively. Then, for each minimum node size, we evaluated the model performance from their respective test data using $\tdu$. The results for HF data can be seen in the left panel of Figure~\ref{fig6}. Due to small number of samples and high censoring rate, the variabilities of $\tdu$ each minimum node size are high. However, the results of $\tdu$ for the TCGA data are more stable, indicated by smaller standard deviations for each minimum node size (right panel of Figure~\ref{fig6}). In HF data we can pick 25 or 75 as the optimum values of the minimum node size for our DTSF architecture because they provide the largest values of $\tdu$. Meanwhile, we choose 75 as the value for the minimum node size.

\begin{figure*}
\centering
\includegraphics[width=375pt,height=14pc]{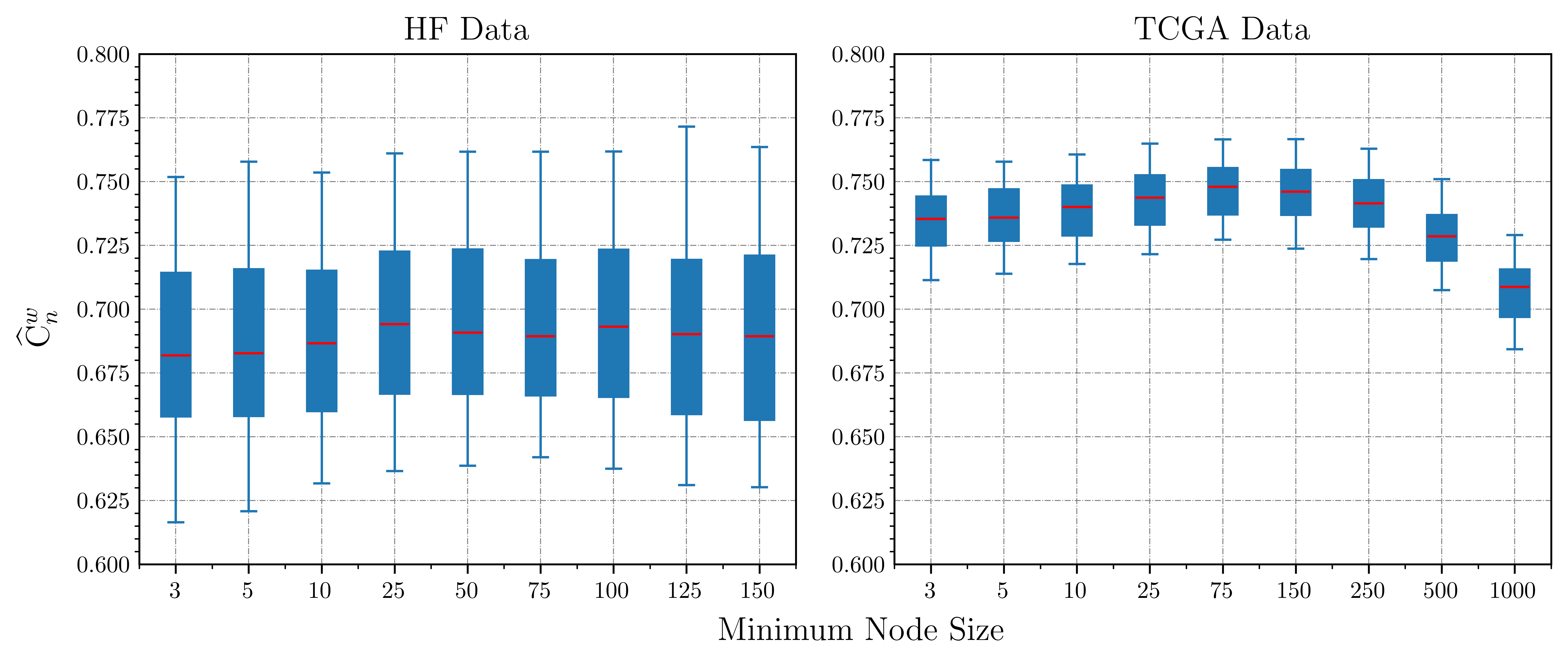}
\caption{Time-dependent Uno's C-index of a DTSF architecture (Table~\ref{table:tab3_app} in Appendix~\ref{app2}) over nine different minimum node sizes fitted to the PH data (Left) and the non-PH data (Right). They were estimated  on 100 test data ($n_\text{test}$=30\% of the data) from models fitted to 100 train data ($n_\text{train}$=70\% of the data).}
\label{fig6}
\end{figure*}

\section{Discussion}\label{sec5}

We discuss several discrimination measures to assess the discrimination ability of nonlinear survival models, such as Nnet-survival\cite{gensheimer2019scalable} and DTSF\cite{schmid2020discrete}. 

We first formulate the discrete time (fixed time point) versions of several well-known discrimination measures, namely Harrel's C-Index, Uno's C-Index and Antolini's Time Dependent Concordance. We discuss advantages and pitfalls of each measure, in particular with respect to a) presence of censored observations and b) violation of the PH assumption. 

In particular, we showed in our simulations (Simulation 1) how, $\tdc$ suffers from bias in the presence of censoring, being a non-IPCW measure, a disadvantage shared with Harrel's C-Index. On the other hand Uno's C-Index is unbiased in the presence of censoring but it is formulated in a way suitable for PH models only, thus providing misleading values when PH assumption is violated. This result is undesirable since it is difficult to justify the performance of the model correctly and in line with the conclusion of Sonabend et al.\cite{sonabend2022avoiding} regarding the C index. In that case, we show (Simulation 3) that the whole predicted survival curve needs to enter the model evaluation.

One possible solution to such an issue is to employ overall discrimination measures for the whole prediction horizon. In this work, we develop time-dependent Uno's C-index ($\tdu$) which can be seen as the weighted version of the time-dependent concordance ($\tdc$). Our contribution by developing a time-dependent Uno's C-index can be viewed from theoretical and practical perspective. From theoretical point of view, we provide a detailed proof of the convergence of $\tdu$ to the corresponding probability. In the convergence proof by Uno et al.\cite{uno2011c}, there is a crucial step which states that Uno's C-index is a U-statistic such that it converges almost surely to its expectation following Theorem 7 in Nolan and Pollard\cite{nolan1987u}. A common approach by a number of authors\cite{gerds2013estimating,cheung2019prioritized}, is to directly apply the results of the paper by Uno et al.\cite{uno2011c} when showing convergence of the proposed discrimination measures. We have shown how our proposed discrimination measure satisfies the assumptions of the theorem in detail, which is not trivial. What is more, we have established a necessary condition that $G(\Tmax)>\varepsilon$ for some $\varepsilon>0$ which is not mentioned in Uno et al.\cite{uno2011c}. Although limiting arguments can be provided to remove this condition, in practice it is needed to prevent the estimators from blowing up near $\Tmax$.   

We also show that time-dependent concordance can be either upward or downward biased and provide a plausible explanation for the latter. Since time-dependent concordance is an extension of Harrell's C-index, our results provide insights on how Harrell's C-index behaves. 

We also show that $\tdc$ has the same issue as $\har(t)$ because their population concordance probabilities still depend on censoring distribution. Our simulation study has confirmed such bias where $\tdc$ becomes more biased as censoring rates increase. The simulation study furthermore shows that the bias direction can be either upward or downward depending on the fitted models and the censoring distribution. On the other hand, $\tdu$ is more stable and closer to the true values. The results also explain that the bias of $\tdu$ depends highly on the estimator of $G$ although $\tdu$ is theoretically an unbiased estimator of the population concordance probability. The better $\widehat{G}$, the less bias of $\tdu$. 

In this paper, we restricted the discussions to a single event of interest and right-censored survival data. The subsequent potential research question is how we extend the time-dependent Uno's C-index to evaluate the performance of other non-linear survival models developed for survival data with different characteristics. In current literature, several machine learning approaches have been proposed for more complex cases in survival analysis, such as Neural survival recommender \citep{jing2017neural} as the long short-term networks (LSTM) for recurrent events, partial logistic artificial neural network competing risks automatic relevance determination (PLANNCR-ARD) \citep{LISBOA20031} and random survival forests competing risks (RSFCR) \citep{ishwaran2014random} for the competing risks, and neural networks for interval-censored survival data  \citep{meixide2024neural}. In these models, we cannot directly apply the time-dependent Uno's C-index to evaluate their model's performance, requiring further research and extensions.

\bmsection*{Acknowledgments}
The authors are grateful for the funding provided by
Indonesian Endowment Fund for Education (LPDP), grant ref no. S-2468/LPDP.4/2019. G. Aivaliotis  would like to acknowledge the contribution of the COST Action CA21169, supported by COST (European Cooperation in Science and Technology). This work was undertaken on ARC3 and ARC4, part of the High Performance Computing facilities at the University of Leeds, UK.

\bmsection*{Financial disclosure}

None reported.

\bmsection*{Conflict of interest}

The authors declare no potential conflict of interests.

\bibliography{wileyNJD-AMA}

\bmsection*{Supporting information}

Additional supporting information may be found in the
online version of the article at the publisher’s website.

\appendix

\bmsection{Proofs of The Convergence}\label{app1}
\vspace*{12pt}
We first present the statistical theory of U-statistics and its convergence results. In the following definition and theorem, we do not present the original versions, but we adapt and extend their notations without losing their essence so that they can fit into our discussions throughout this work.

\begin{definition}[U-statistic \citep{nolan1987u}]\label{def1}
	Let $\mathcal{x}_1,\mathcal{x}_2,\cdots$ be independent samples from a distribution $F$ on $\mathscr{X}$, and $\mathscr{H}$  be a class of symmetric functions on $\mathscr{X} \otimes \mathscr{X}$.
	For $h \in \mathscr{H}$, we define 
	\begin{equation}\label{eqn:sn}
		\zeta_n(h)= \sum_{1\leq i\ne j\leq n}h(\mathcal{x}_i,\mathcal{x}_j).
	\end{equation}
	Then $\frac{1}{n(n-1)}\zeta_n(h)$, indexed by $h \in \mathscr{H}$, is called the U-statistic.
\end{definition}

\begin{theorem}[\cite{serfling1980approximation} Theorem A, p.~190]\label{theo:conv_ustat_serf}
    Let $\zeta_n(h)\big/ n(n-1)$ be the U-statistic for a symmetric kernel $h \in \mathscr{H}$. Let 	
    \begin{equation*}\label{eqn:fof}
	F \otimes F(h)=\iint h(\mathcal{x}_i,\mathcal{x}_j)\,dF(\mathcal{x}_i)\,dF(\mathcal{x}_j) 
    \end{equation*}
    be the expected value of $h(\mathcal{x}_i,\mathcal{x}_j)$ with respect to $F \otimes F$.
    If $F \otimes F(\left|h\right|) < \infty$, then $\frac{\zeta_n(h)}{n(n-1)}  \overset{a.s.}{\to}  F \otimes F(h)$.
\end{theorem}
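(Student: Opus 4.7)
The plan is to recognize $\zeta_n(h)/(n(n-1))$ as a classical symmetric U-statistic of order two and to exploit its reverse (backward) martingale structure with respect to the exchangeable $\sigma$-fields, in order to invoke the reverse martingale convergence theorem. Throughout, set
\begin{equation*}
U_n \;=\; \frac{\zeta_n(h)}{n(n-1)} \;=\; \binom{n}{2}^{-1} \sum_{1 \le i < j \le n} h(x_i, x_j),
\end{equation*}
where the second equality uses the symmetry of $h$. The integrability hypothesis $F\otimes F(|h|) < \infty$ ensures that $\ee|U_n| \le F\otimes F(|h|) < \infty$ uniformly in $n$ and that $\ee[U_n] = F\otimes F(h)$ for every $n \ge 2$.

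Next I would introduce the decreasing family of $\sigma$-algebras $\mathcal{E}_n \supseteq \mathcal{E}_{n+1} \supseteq \cdots$, where $\mathcal{E}_n$ consists of those events in $\sigma(x_1, x_2, \ldots)$ that are invariant under arbitrary permutations of the first $n$ coordinates (equivalently, $\mathcal{E}_n = \sigma(\{x_1,\ldots,x_n\}, x_{n+1}, x_{n+2}, \ldots)$, where the unordered sample appears). The key identity to verify is
\begin{equation*}
\ee\bigl[h(x_1,x_2) \,\big|\, \mathcal{E}_n\bigr] \;=\; U_n \qquad (n \ge 2),
\end{equation*}
which follows because, conditional on the unordered multiset $\{x_1,\ldots,x_n\}$ and on $(x_{n+1},\ldots)$, the ordered pair $(x_1,x_2)$ is uniformly distributed over the $n(n-1)$ ordered pairs of distinct indices. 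In particular $U_n$ is $\mathcal{E}_n$-measurable, and conditioning further on $\mathcal{E}_{n+1} \subseteq \mathcal{E}_n$ gives $\ee[U_n \mid \mathcal{E}_{n+1}] = U_{n+1}$, so $(U_n)$ is a reverse martingale for $(\mathcal{E}_n)$.

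The reverse martingale convergence theorem (which only requires $L^1$-boundedness, supplied by the uniform bound above) then yields the existence of an integrable $U_\infty$ with $U_n \to U_\infty$ almost surely and in $L^1$. To identify $U_\infty$, observe that $U_\infty$ is measurable with respect to the tail $\sigma$-field $\bigcap_{n} \mathcal{E}_n$, which is contained in the permutation-invariant $\sigma$-field of the i.i.d. sequence $(x_i)$; by the Hewitt--Savage zero--one law this is trivial, so $U_\infty$ is almost surely a constant. Passing $\ee[U_n] = F\otimes F(h)$ to the $L^1$ limit identifies this constant as $F\otimes F(h)$, and the conclusion follows.

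The main obstacle I anticipate is the careful combinatorial verification of the reverse martingale identity $\ee[h(x_1,x_2)\mid \mathcal{E}_n] = U_n$; the rest consists of standard applications of the reverse martingale convergence theorem and the Hewitt--Savage zero--one law. A secondary point to handle cleanly is justifying that the permutation-invariant $\sigma$-field of the i.i.d. sample is indeed the appropriate trivial $\sigma$-field for Hewitt--Savage (as opposed to the tail $\sigma$-field), which is standard but worth stating explicitly.
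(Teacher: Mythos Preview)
Your reverse-martingale argument is correct and is in fact the classical proof of the strong law for U-statistics; it is essentially the proof given in Serfling's book, to which the paper defers. Note that the paper does not supply its own proof of this statement: Theorem~\ref{theo:conv_ustat_serf} is quoted from \cite{serfling1980approximation} as a known result and is merely applied in the proof of Theorem~\ref{lem1}, so there is no independent argument in the paper to compare against.
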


\bmsubsection{Convergence of Time-Dependent Concordance \label{app1.1a}}

\begin{proof}[Proof of Lemma~\ref{lem1}]
    We rewrite  the numerator of \eqref{eqn:est_tdc} as follows
    \begin{equation}\label{eqn:a1}
        \sum_{i\neq j}^n\indd{S(T_i;\dZ_i)<S(T_i;\dZ_j)}\indd{T_i<T_j, T_i < \Tmax}\indd{T_i < D_i}\indd{T_i < D_j}.
    \end{equation}
    We then denote $\mathcal{x}_i=(\dZ_i,T_i,D_i)$ and $\mathcal{x}_j=(\dZ_j,T_j,D_j)$ so that $\{\mathcal{x}_1,\mathcal{x}_2,\cdots\}$ are independent samples from distribution $F$ on $\mathscr{X}$. Let $\mathscr{H}$ be a class of functions on $\mathscr{X} \otimes \mathscr{X}$ given by
    \begin{equation*}\label{eqn:h}
        h(\mathcal{x}_i,\mathcal{x}_j)=\indd{S(T_i;\dZ_i)<S(T_i;\dZ_j)}\indd{T_i<T_j, T_i < \Tmax}\indd{T_i < D_i}\indd{T_i < D_j}.
    \end{equation*}
    Because $h$ is not symmetric, we need to symmetrise it as follows 
    \begin{equation*}\label{eqn:h_sim}
        \bar{h}\left(\mathcal{x}_i,\mathcal{x}_j\right)=\frac{1}{2}\left(h(\mathcal{x}_i,\mathcal{x}_j)+h(\mathcal{x}_j,\mathcal{x}_i)\right).
    \end{equation*}
    Then \eqref{eqn:a1} equals
    \begin{equation*}\label{eqn:u-stat2}
        \frac{2}{n(n-1)}\sum_{1\leq i<j\leq n}\bar{h}(\mathcal{x}_i,\mathcal{x}_j)
        =
        \frac{1}{n(n-1)}\sum_{1\leq i \ne j\leq n}\bar{h}(\mathcal{x}_i,\mathcal{x}_j)
    \end{equation*}
    and this expression is a $U$-statistic with kernel $\bar{h}$, see Definition~\ref{def1}.
    We have $F \otimes F(|\bar{h}|) \le 1$ since $\bar h$ takes values $0$ or $1$.
 Therefore, the condition of Theorem~\ref{theo:conv_ustat_serf}  is satisfied so that
    \begin{equation}\label{eqn:har_num_conv}
        \frac{2}{n(n-1)}\sum_{1\leq i<j\leq n}\bar{h}(\mathcal{x}_i,\mathcal{x}_j) \overset{a.s.}{\to} F \otimes F(\bar{h}).
    \end{equation}
We evaluate the right-hand side
\begin{equation*}
\begin{aligned}
\lefteqn{F \otimes F(\bar{h})}\\
	&=
        \frac{1}{2} \left[F \otimes F\big(\indd{S(T_i;\dZ_i<S(T_i;\dZ_j)}\indd{T_i<T_j, T_i < \Tmax}\indd{T_i < D_i}\indd{T_i < D_j}\big) + F \otimes F\big(\indd{S(T_j;\dZ_j)<S(T_j;\dZ_i)}\indd{T_j<T_i, T_j < \Tmax}\indd{T_j< D_j}\indd{T_j < D_i}\big)\right]\\
	&=
        \frac{1}{2} \left[\prob \big[ S(T_i;\dZ_i)<S(T_i;\dZ_j), T_i<T_j, T_i < \Tmax,T_i < D_i \wedge D_j\big] + \prob \big[ S(T_j;\dZ_j)<S(T_j;\dZ_i), T_j<T_i, T_j < \Tmax, T_j < D_j \wedge D_i\big]\right]\\
    &= \prob \big[ S(T_i;\dZ_i)<S(T_i;\dZ_j), T_i<T_j, T_i < \Tmax,T_i < D_i \wedge D_j\big],
\end{aligned}
\end{equation*}
where the last equality holds as $(\dZ_i, T_i, D_i)$ and $(\dZ_j, T_j, D_j)$ are independent and identically distributed.

We now rewrite the denominator of \eqref{eqn:est_tdc} as follows
	\begin{equation*}
		\sum_{i\neq j}^n\indd{T_i<T_j, T_i < \Tmax}\indd{T_i < D_i}\indd{T_i < D_j}.
	\end{equation*}
We notice that the above sum equals
\[
\frac{2}{n(n-1)}\sum_{1\leq i<j\leq n}\bar{\psi}(\mathcal{x}_i,\mathcal{x}_j)
\]
where $\bar{\psi}$ is the symmetrised version of $\psi(\mathcal{x}_i,\mathcal{x}_j)=\indd{T_i<T_j, T_i < \Tmax}\indd{T_i < D_i}\indd{T_i < D_j}$. Because the only difference between the numerator and the denominator is the indicator function $\indd{S(t;\dZ_i)<S(t;\dZ_j)}$ whose values is either $0$ or $1$, we can use the same approach and arguments used in obtaining the convergence in \eqref{eqn:har_num_conv} to show
\begin{equation}\label{eqn:har_den_conv}
		\frac{2}{n(n-1)}\sum_{1\leq i<j\leq n}\bar{\psi}(\mathcal{x}_i,\mathcal{x}_j)
		\overset{a.s.}{\to}
		F \otimes F(\bar{\psi}),
\end{equation}
and
\begin{equation*}
F \otimes F(\bar{\psi}) 
= \frac{1}{2} \left(\prob \big[ T_i<T_j, T_i < \Tmax, T_i < D_i \wedge D_j\big]+ \prob \big[T_j<T_i, T_j < \Tmax,  T_j < D_i \wedge D_j\big]\right)
= \prob \big[ T_i<T_j, T_i < \Tmax, T_i < D_i \wedge D_j\big].
\end{equation*}
By an application of the Continuous Mapping Theorem \citep{shao2003mathematical}, we have  
    \begin{equation*}
        \begin{aligned}
        \frac{\sum_{1 \le i < j\leq n} \bar{h}(\mathcal{x}_i,\mathcal{x}_j)}{\sum_{1 \le i< j\leq n}\bar{\psi}(\mathcal{x}_i,\mathcal{x}_j)}
	\overset{a.s.}{\to} 
	\frac{ F\otimes F(\bar{h})}{\ F\otimes F(\bar{\psi})} 
	&=
 \frac{\prob \big[ S(T_i;\dZ_i)<S(T_i;\dZ_j), T_i<T_j, T_i < \Tmax,T_i < D_i \wedge D_j\big]}{\prob \big[ T_i<T_j, T_i < \Tmax, T_i < D_i \wedge D_j\big]}\\
	&=
        \prob\left[S(T_i;\dZ_i) < S(T_i;\dZ_j) \big|T_i < T_j, T_i < \Tmax, T_i < D_i \wedge D_j\right],
	\end{aligned}
    \end{equation*}
    which concludes the convergence of $\tdc$.
\end{proof}

\bmsubsection{Convergence of Time-Dependent Uno's C-Index \label{app1.1b}}

\begin{definition}[Envelope \citep{nolan1987u}]\label{def2}
	Consider the class of symmetric functions $\mathscr{H}$. If $H(\cdot,\cdot) \geq |h(\cdot,\cdot)|$ for each $h \in \mathscr{H}$, then $H$ is a positive envelope for $\mathscr{H}$.
\end{definition}

\begin{definition}[Covering Number \citep{nolan1987u}]\label{def3}
	Let $H$ be a positive envelope of the class $\mathscr{H}$. For $\delta > 0$, the covering number $N_p(\delta,Q,\mathscr{H},H)$ with respect to measure $Q$ such that $0<Q(H^p)<\infty$ is defined as the smallest cardinality for a subclass $\mathscr{H}^*$ of  $\mathscr{H}$ such that
	\begin{equation*}
		\min_{h^* \in \mathscr{H}^*} Q\big|h-h^*\big|^p \leq \delta Q(H^p),  \hspace*{3pt} \text{for each}  \hspace*{3pt} h \hspace*{3pt} \text{in}  \hspace*{3pt} \mathscr{H}.
	\end{equation*}
\end{definition}

\begin{theorem}[Uniform Almost-Sure Convergence of U-processes (see Theorem 7 in page 787 in the paper by Nolan and Pollard \cite{nolan1987u})]\label{theo:conv_ustat}
    Let $\T_n(\cdot)$ be defined as
    \begin{equation}\label{eqn:Tn_ori}
        \T_n(h)=\sum_{1\leq i\neq j\leq n} \big(h(\mathcal{x}_{2i},\mathcal{x}_{2j})+h(\mathcal{x}_{2i},\mathcal{x}_{2j-1})+h(\mathcal{x}_{2i-1},\mathcal{x}_{2j})+h(\mathcal{x}_{2i-1},\mathcal{x}_{2j-1})\big),
    \end{equation}
    where $\mathcal{x}_1,\cdots,\mathcal{x}_{2n}$ are obtained by taking a double sample from a distribution $F$ on $\mathscr{X}$, and $h$ is a function in the symmetric class $\mathscr{H}$ with envelope $H$, respectively. For any $h$ in $\mathscr{H}$, we define
    \begin{eqnarray*}
		F_n \otimes F (h)
		&=&
		\frac1n\sum_{i=1}^{n} \int h(\mathcal{x}_i, \mathcal{x}_j)\,dF(\mathcal{x}_j),
    \end{eqnarray*}
where $F_n$ is the empirical distribution of the random sample of size $n$ from $F$ (i.e., it is a random distribution),

and 
    \begin{eqnarray*}
		F \otimes F(h)
		&=&
		\iint h(\mathcal{x}_i,\mathcal{x}_j)\,dF(\mathcal{x}_i)\,dF(\mathcal{x}_j) 
    \end{eqnarray*}
    is the expected value of $h(\mathcal{x}_i,\mathcal{x}_j)$ with respect to $F \otimes F$.	
    If for each $\delta>0$,
    \begin{enumerate}[i)]
		\item $\log N_1(\delta,\T_n, \mathscr{H}, H)=o_p(n)$
		\item $\log N_1(\delta, F_n\otimes F, \mathscr{H}, H)=o_p(n)$
		\item $N_1(\delta, F\otimes F, \mathscr{H}, H)<\infty$,
    \end{enumerate}
	then 
	\begin{equation*}
		\text{sup}_{h \in \mathscr{H}}\bigg|\bigg(\frac{\zeta_n(h)}{n(n-1)}\bigg)-F\otimes F(h)\bigg| \overset{a.s.}{\to}  0,
	\end{equation*}
	as $n \to \infty$, where a.s. stands for almost sure convergence, and
	\begin{equation*}
		\zeta_n(h)= \sum_{1\leq i<j\leq n}h(\mathcal{x}_i,\mathcal{x}_j)
	\end{equation*}
	 as defined in \eqref{eqn:sn}.
\end{theorem}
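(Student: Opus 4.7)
The plan is to follow the classical Vapnik--Chervonenkis--Pollard paradigm for uniform laws of large numbers, adapted to U-statistics. The argument reduces the uniform convergence statement over an infinite class $\mathscr{H}$ to a tail bound on a finite $\delta$-cover of $\mathscr{H}$, then exploits the subexponential covering-number growth in (i)--(iii) together with an exponential concentration inequality and Borel--Cantelli.

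First, I would symmetrize using an independent ghost sample $(\mathcal{x}_1^*,\ldots,\mathcal{x}_n^*)$ drawn from $F$. A standard symmetrization inequality shows that, up to constants,
\[
\prob\Bigl(\sup_{h\in\mathscr{H}} \bigl|\tfrac{\zeta_n(h)}{n(n-1)} - F\otimes F(h)\bigr| > 2\eta\Bigr)
\]
is controlled by the analogous probability for the difference of two independent U-statistics, one on the original and one on the ghost sample, provided the variance over $\mathscr{H}$ is controlled by $F\otimes F(H^2)$ (after truncation). For U-processes, this symmetrization requires a decoupling step: one rewrites the symmetric U-statistic as an average, over permutations of $\{1,\ldots,2n\}$, of sums over disjoint index pairs. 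This is exactly the role of $\T_n(h)$ in the statement: its pairs, conditional on the partition into disjoint subsamples, are independent and therefore amenable to the classical i.i.d.\ empirical-process machinery.

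Second, I would apply the covering argument. For fixed $\delta>0$, condition (i) yields a (random) $\delta$-cover $\mathscr{H}^*$ of $\mathscr{H}$ in the $L^1(\T_n)$-distance, of cardinality $\exp(o_p(n))$. By the envelope-weighted definition of $N_1$, every $h\in\mathscr{H}$ lies within $\delta\,\T_n(H)$ of some $h^*\in\mathscr{H}^*$, so the supremum of the decoupled U-process over $\mathscr{H}$ is bounded by its maximum over $\mathscr{H}^*$ plus an error of order $\delta\, F\otimes F(H)$, where condition (ii) allows the passage from $\T_n$ to $F_n\otimes F$ and condition (iii) from $F_n\otimes F$ to the deterministic limit $F\otimes F$. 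For each fixed $h^*\in\mathscr{H}^*$, a Hoeffding-type exponential inequality for bounded U-statistics (or its Bernstein variant after envelope truncation) yields $\prob\bigl(|\zeta_n(h^*)/n(n-1) - F\otimes F(h^*)|>\eta\bigr) \le 2\exp(-c\, n\,\eta^2)$. Union-bounding over the $\exp(o_p(n))$-sized cover produces a probability bound of order $\exp(o_p(n) - c\, n\,\eta^2)$, which is summable in $n$; Borel--Cantelli then delivers a.s.\ convergence, and letting $\delta\downarrow 0$ concludes the proof.

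The main obstacle is the interplay between the three covering-number conditions, each attached to a different random or deterministic measure: $\T_n$ inside the decoupled sum, $F_n\otimes F$ inside the Hoeffding projection, and $F\otimes F$ at the population level. Choosing the covers so that they remain simultaneously valid along the chain (decoupling, projection, limit) is the technical heart of Nolan and Pollard's paper and is where the bookkeeping becomes delicate. A secondary difficulty is dealing with possibly unbounded envelopes $H$: a truncation argument is needed before the exponential bound applies, and the envelope-weighted form of $N_1$ in (i)--(iii) is precisely what makes this truncation go through uniformly over $\mathscr{H}$.
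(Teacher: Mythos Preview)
The paper does not prove this theorem: it is stated as a quotation of Theorem~7 of Nolan and Pollard and is used as a black box in the proofs of Lemma~\ref{lemma:conv_num_tdu} and Corollary~\ref{corollary:conv_den_tdu}. There is therefore no ``paper's own proof'' to compare your proposal against; the authors' effort is spent entirely on verifying conditions (i)--(iii) for their specific class $\bar{\mathscr{H}}_\epsilon$, not on re-deriving the Nolan--Pollard result.

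That said, your sketch is a faithful high-level summary of the original Nolan--Pollard argument: decoupling via the double-sample construction $\T_n$, symmetrization, reduction to a finite $\delta$-cover whose cardinality grows subexponentially by (i)--(iii), an exponential inequality on each cover element, union bound, and Borel--Cantelli. The one point where your outline is a bit loose is the handling of the three different measures $\T_n$, $F_n\otimes F$, and $F\otimes F$: in Nolan and Pollard these enter at genuinely different stages (the first inside the symmetrized, decoupled process; the second in controlling the Hoeffding projection $F_n\otimes F(h)-F\otimes F(h)$ as an ordinary empirical process; the third to ensure the limit approximation error is small), and the covers are chosen separately for each stage rather than ``simultaneously valid along the chain'' as you phrase it. But this is a matter of exposition rather than a gap in the strategy.
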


The notation $o_p$ in Theorem~\ref{theo:conv_ustat} has the following meaning: when we write that $Y_n = o_p(n)$, we mean that $Y_n/n$ converges to $0$ in probability, i.e.,
\[
	\lim_{n\to\infty} \prob\left[\left|Y_n/n\right| \ge \epsilon \right] =0, \qquad \forall \epsilon > 0.
\]
In our proofs we will establish a stronger fact, that is, the covering numbers in (i) and (ii) are $o(n)$, so $Y_n(\omega)/n \to 0$ uniformly in $\omega$.
We remark that almost sure convergence implies the convergence in probability - we will use this fact in our proofs.

Next, we will demonstrate the convergence of the numerator of $\tdu$ by showing that all the assumptions of Theorem~\ref{theo:conv_ustat} are satisfied. In particular, the assumptions will be satisfied by the U-statistic with symmetrised kernel obtained from the  numerator of $\tdu$.

Recall the definition of $\mathcal{G}_\epsilon$ from \eqref{eqn:G_epsilon}. For $i=1,2,\cdots$, we denote by $\mathcal{x}_i=(\dZ_i,T_i,D_i)$ independent samples from distribution $F$ on $\mathscr{X}$. Define
\begin{equation}\label{eqn:hg_}
 		h^g(\mathcal{x}_i,\mathcal{x}_j)=\indd{T_i < D_i}\indd{S(T_i;\dZ_i)<S(T_i;\dZ_j)}\indd{T_i<X_j}g^{-2}(T_i), \qquad g \in \mathcal{G}_\epsilon.
\end{equation}
Let
\begin{equation}\label{eqn:h_eps}
 		\mathscr{H}_{\epsilon}=\{h^g: g \in \mathcal{G}_{\epsilon}\}, 
\end{equation}
and 
\begin{equation}\label{eqn:bar_h_eps}
    \bar{\mathscr{H}}_{\epsilon}= \{ \bar{h}^g: \bar{h}^g(x_i, x_j) = (h^g(x_i, x_j) + h^g(x_j, x_i))/2: h^g \in  \mathscr{H}_{\epsilon} \},
 \end{equation}
where $\bar{h}^{g}$ is the symmetrised version of $h^g$.

\begin{lemma}[Convergence of The Symmetrised Numerator of Time-dependent Uno's C-index]\label{lemma:conv_num_tdu}
Assume that $\epsilon > 0$ and conditions (R1-R2) hold. Then
 	\begin{equation*}
 		\displaystyle \text{sup}_{\bar{h}^g\in \bar{\mathscr{H}}_{\epsilon}} \bigg| \frac{2}{n(n-1)} \sum_{1\leq i<j\leq n}\bar{h}^g(\mathcal{x}_i,\mathcal{x}_j) - F\otimes F(\bar{h}^g)\bigg| \overset{a.s.}{\to} 0,
 	\end{equation*}
as $n \to \infty$, where we use the notation introduced above.  
\end{lemma}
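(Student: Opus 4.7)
\emph{Proof proposal.} The plan is to apply the uniform almost-sure convergence theorem of Nolan and Pollard (Theorem~\ref{theo:conv_ustat}) to the symmetric class $\bar{\mathscr{H}}_\epsilon$, with the constant envelope $H \equiv \epsilon^{-2}$. Since every $g \in \mathcal{G}_\epsilon$ satisfies $g \geq \epsilon$ and the indicator factors in $h^g$ lie in $[0,1]$, we have $|h^g| \le \epsilon^{-2}$ and, by the triangle inequality, $|\bar{h}^g| \le \epsilon^{-2}$, so $H$ is indeed a positive envelope in the sense of Definition~\ref{def2}. Moreover, any $L^1(Q)$ $\delta$-net for $\mathscr{H}_\epsilon$ induces a net for $\bar{\mathscr{H}}_\epsilon$ of the same cardinality by symmetrising its elements, so it is enough to bound covering numbers of $\mathscr{H}_\epsilon$.

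The key structural observation is that $h^g$ depends on $g$ only through the scalar $g(T_i)$. For any $g_1, g_2 \in \mathcal{G}_\epsilon$,
\begin{equation*}
|h^{g_1}(\mathcal{x}_i,\mathcal{x}_j) - h^{g_2}(\mathcal{x}_i,\mathcal{x}_j)| \le |g_1^{-2}(T_i) - g_2^{-2}(T_i)| \le 2\epsilon^{-4}\,|g_1(T_i) - g_2(T_i)|,
\end{equation*}
where the last inequality uses $|a^{-2}-b^{-2}| \le 2\epsilon^{-4}|a-b|$ for $a, b \in [\epsilon, 1]$. Hence, for any measure $Q$ on $\mathscr{X} \otimes \mathscr{X}$ with $T_i$-marginal $Q_T$, a $\delta$-cover of $\mathcal{G}_\epsilon$ in $L^1(Q_T)$ yields a $(2\epsilon^{-4}\delta)$-cover of $\mathscr{H}_\epsilon$ in $L^1(Q)$. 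It therefore suffices to bound the $L^1(Q_T)$ covering numbers of $\mathcal{G}_\epsilon$ uniformly in $Q_T$.

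Finally, $\mathcal{G}_\epsilon$ consists of uniformly bounded non-increasing right-continuous functions on $[0, \Tmax)$. By a classical bracketing bound for monotone function classes (see e.g. Theorem 2.7.5 of van der Vaart and Wellner, \emph{Weak Convergence and Empirical Processes}), there is a constant $K = K(\epsilon)$ such that $\log N_{[]}(\delta, \mathcal{G}_\epsilon, L^1(Q_T)) \le K/\delta$ for every probability measure $Q_T$ and every $\delta > 0$. Since bracketing numbers dominate covering numbers, this gives a uniform bound on $\log N_1$ that is $O(1)$ as $n \to \infty$, with constants depending only on $\delta$ and $\epsilon$. Consequently, hypotheses (i) and (ii) of Theorem~\ref{theo:conv_ustat} hold trivially (the covering numbers are $O(1) = o(n)$ deterministically, hence $o_p(n)$), and hypothesis (iii) is just the finiteness of $N_1(\delta, F\otimes F, \bar{\mathscr{H}}_\epsilon, H)$. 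The conclusion of the Nolan--Pollard theorem then gives the desired uniform almost-sure convergence. The main technical obstacle is establishing the covering bound for $\mathcal{G}_\epsilon$ uniformly in $Q_T$; the crucial trick that makes the whole argument work is the factorisation of $h^g$ through the single value $g(T_i)$, which reduces the two-argument U-process covering problem to the classical one-dimensional setting of bounded monotone functions.
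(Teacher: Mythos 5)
Your proposal is correct in substance but follows a genuinely different route to the covering-number bounds than the paper. Both arguments share the same skeleton — verify the three hypotheses of the Nolan--Pollard theorem for the class $\bar{\mathscr{H}}_\epsilon$ with constant envelope $\epsilon^{-2}$, and reduce the two-argument covering problem to covering the one-dimensional monotone class $\{g^{-2}: g \in \mathcal{G}_\epsilon\}$ via the observation that $h^g$ depends on $g$ only through $g(T_i)$. Where you diverge is in how that one-dimensional class is covered: the paper constructs explicit approximating families by hand, discretising the range $[1,\epsilon^{-2}]$ into $M_{\delta^*}$ levels and counting non-decreasing assignments on the sample points (a stars-and-bars bound of $(2n+1)^{M_{\delta^*}}$, giving $\log N = O(\log n) = o(n)$ for hypotheses (i)--(ii)), and then runs a separate two-step atom/interval construction to get finiteness of the covering number under the population measure $1-G$ for hypothesis (iii). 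You instead pass through the Lipschitz estimate $|a^{-2}-b^{-2}| \le 2\epsilon^{-4}|a-b|$ and invoke the classical uniform-in-$Q$ entropy bound for bounded monotone functions (van der Vaart--Wellner, Theorem 2.7.5), which delivers an $n$-independent bound $\exp(K/\delta)$ covering all three hypotheses at once. Your route is shorter and strictly stronger (it gives $O(1)$ rather than $o(n)$ covering numbers, and it dispenses with the paper's atom/interval argument entirely); the paper's route is self-contained and elementary. One point you should tighten: symmetrising a net for $\mathscr{H}_\epsilon$ controls $Q\lvert\bar h^g - \bar h^{g_k}\rvert$ only through \emph{both} the first- and second-coordinate $T$-marginals of $Q$ (the term $h^g(\mathcal{x}_j,\mathcal{x}_i)$ depends on $g(T_j)$), so the net must work for both marginals simultaneously — the paper makes this explicit via the decomposition $\bar{\mathscr{H}}_\epsilon = \mathscr{V}^1_\epsilon + \mathscr{V}^2_\epsilon$ and Nolan--Pollard's Lemma 16. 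Because your monotone-class bound is uniform over all probability measures, this is immediately repaired, but it deserves a sentence. You should also note that bracketing nets need not lie inside $\mathcal{G}_\epsilon$, whereas Definition~\ref{def3} asks for a subclass of $\mathscr{H}$; the standard passage from an external $\delta$-cover to an internal $2\delta$-cover fixes this without changing the order of the bound.
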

 
 \begin{proof}[Proof of Lemma~\ref{lemma:conv_num_tdu}]
 	
Since
\begin{equation*}
	\frac{2}{n(n-1)} \sum_{1\leq i<j\leq n}\bar{h}^g(\mathcal{x}_i,\mathcal{x}_j) 
\end{equation*}
is a U-statistic with kernel $\bar{h}^g$ (see Definition~\ref{def1}) we only need to show that $\bar{\mathscr{H}}_{\epsilon}$ 
satisfies all the assumptions of Theorem~\ref{theo:conv_ustat}. Conditions (i) and (ii) of the theorem will be shown to be satisfied for each $\omega$, so we will argue for arbitrary deterministic sequences of points in $\mathscr{X}$ instead of random samples from $F$.

We first find a positive envelope of $\bar{\mathscr{H}}_\epsilon$. As $\epsilon >0$ is the minimum value of $g \in \mathcal{G}_\epsilon$, it is easy to see that the constant function $\bar{H}_{\epsilon}=(1/\epsilon^2)$ is the sought envelope.

Recall the operator $\mathbb{T}_n$ defined in \eqref{eqn:Tn_ori}.
 	
We trivially have $\T_n(\bar{H}^{g})=4n(n-1)/\epsilon^2$. To find the upper bound of $N_1( \delta, \T_n, \bar{\mathscr{H}}_{\epsilon}, \bar{H}^{g})$, for any sequence $(x_1, \ldots, x_{2n}) \in \mathscr{X}$, we will determine a relatively small number of functions
    \begin{equation*}
 		\{\bar{h}^{g_{1}},\bar{h}^{g_{2}},\cdots, \bar{h}^{g_{k}}\} \subseteq \bar{\mathscr{H}}_{\epsilon}
    \end{equation*}
such that for any $\bar{h}^{g} \in \bar{\mathscr{H}}_{\epsilon}$ we have
\begin{equation}\label{min_Tn}
\min_{1 \leq k \leq K} \T_n\left(\big|\bar{h}^{g}-\bar{h}^{g_{k}}\big|\right) \leq 
 			\delta \T_n(\bar{H}_{\epsilon}) = (\delta/\epsilon^2) 4n(n-1).
\end{equation}

Fix a sequence $(x_1, \ldots, x_{2n}) \in \mathscr{X}$. Denote
\[
\mathscr{K}(x_i, x_j) = \indd{T_i < D_i}\indd{S(T_i;\dZ_i)<S(T_i;\dZ_j)}\indd{T_i<X_j}.
\]
For $\bar h^g \in \bar{\mathscr{H}}_\epsilon$ and $1 \leq k \leq K$, we have
    \begin{equation*}
 	\begin{aligned}
 	  &\T_n\left( \left|\bar{h}^g-\bar{h}^{g_k}\right|\right)\\
        &=  \sum_{i\neq j}^n\Big[\left|\bar{h}^g\left(x_{2 i}, x_{2 j}\right)-\bar{h}^{g_k}\left(x_{2 i}, x_{2 j}\right)\right| +\left|\bar{h}^g\left(x_{2 i}, x_{2 j-1}\right)-\bar{h}^{g_k}\left(x_{2 i}, x_{2 j-1}\right)\right|\\
        &\hspace{38pt}+ \big| \bar{h}^g\left(x_{2 i-1}, x_{2 j}\right) -\bar{h}^{g_k}\left(x_{2 i-1}, x_{2 j}\right) \big|+ \big| \bar{h}^g\left(x_{2 i-1}, x_{2 j-1}\right)-\bar{h}^{g_k}\left(x_{2 i-1}, x_{2 j-1}\right) \big| \Big] \\
 	&=
        \sum_{i\neq j}^n \frac{1}{2}\bigg[\big| \mathscr{K}\left(x_{2 i}, x_{2 j}\right) g^{-2}\left(T_{2 i}\right)+\mathscr{K}\left(x_{2 j}, x_{2 i}\right) g^{-2}\left(T_{2 j}\right) -\mathscr{K}\left(x_{2 i}, x_{2 j}\right) g_k^{-2}\left(T_{2 i}\right)-\mathscr{K}\left(x_{2 j}, x_{2 i}\right) g_k^{-2}\left(T_{2 j}\right) \big| \\
        &\hspace{38pt}+\big|\mathscr{K}\left(x_{2 i}, x_{2 j-1}\right) g^{-2}\left(T_{2 i}\right)+\mathscr{K}\left(x_{2 j-1}, x_{2 i}\right) g^{-2}\left(T_{2 j-1}\right) -\mathscr{K}\left(x_{2 i}, x_{2 j-1}\right) g_k^{-2}\left(T_{2 i}\right)-\mathscr{K}\left(x_{2 j-1}, x_{2 i}\right) g_k^{-2}\left(T_{2 j-1}\right) \big|\\
        &\hspace{38pt}+ \big| \mathscr{K}\left(x_{2 i-1}, x_{2 j}\right) g^{-2}\left(T_{2 i-1}\right)+\mathscr{K}\left(x_{2 j}, x_{2 i-1}\right) g^{-2}\left(T_{2 j}\right)- \mathscr{K}\left(x_{2 i-1}, x_{2 j}\right) g_k^{-2}\left(T_{2 i-1}\right)-\mathscr{K}\left(x_{2 j}, x_{2 i-1}\right) g_k^{-2}\left(T_{2 j}\right) \big| \\
 	&\hspace{38pt} + \big| \mathscr{K}\left(x_{2 i-1}, x_{2 j-1}\right) g^{-2}\left(T_{2 i-1}\right)+\mathscr{K}\left(x_{2 j-1}, x_{2 i-1}\right) g^{-2}\left(T_{2 j-1}\right) \\
 	&\hspace{38pt} -\mathscr{K}\left(x_{2 i-1}, x_{2 j-1}\right) g_k^{-2}\left(T_{2 i-1}\right)-\mathscr{K}\left(x_{2 j-1}, x_{2 i-1}\right) g_k^{-2}\left(T_{2 j-1}\right) \big| \bigg]\\
 	&\leq \sum_{i\neq j}^n \frac{1}{2}\Big[\left|g^{-2}\left(T_{2 i}\right)-g_k^{-2}\left(T_{2 i}\right)\right|+\left|g^{-2}\left(T_{2 j}\right)-g_k^{-2}\left(T_{2 j}\right)\right|+ \left|g^{-2}\left(T_{2 i}\right)-g_k^{-2}\left(T_{2 i}\right)\right|+\left|g^{-2}\left(T_{2 j-1}\right)-g_k^{-2}\left(T_{2 j-1}\right)\right| \\
 	&\hspace{38pt}+ \left|g^{-2}\left(T_{2 i-1}\right)-g_k^{-2}\left(T_{2 i-1}\right)\right|+\left|g^{-2}\left(T_{2 j}\right)-g_k^{-2}\left(T_{2 j}\right)\right|+\left|g^{-2}\left(T_{2 i-1}\right)-g_k^{-2}\left(T_{2 i-1}\right)\right|+\left|g^{-2}\left(T_{2 j-1}\right)-g_k^{-2}\left(T_{2 j-1}\right)\right|\Big] \\
 &= \sum_{i=1}^{2 n}(n-1)\left|g^{-2}\left(T_i\right)-g_k^{-2}\left(T_i\right)\right|+\sum_{j=1}^{2 n}(n-1)\left|g^{-2}\left(T_j\right)-g_k^{-2}\left(T_j\right)\right|\\
 & = 2(n-1) \sum_{i=1}^{2 n}\left|g^{-2}\left(T_i\right)-g_k^{-2}\left(T_i\right)\right|,
        \end{aligned}
    \end{equation*}
where the inequality is due to the triangle inequality and $0 \leq \mathscr{K}(\cdot,\cdot) \leq 1 $. Therefore, 
\begin{align*}
\min_{1 \leq k \leq K} \T_n\big|\bar{h}^{g}-\bar{h}^{g_{k}}\big|
&\leq
 		\min_{1 \leq k \leq K}\left(2(n-1) \sum_{i=1}^{2 n}\left|g^{-2}\left(T_i\right)-g_k^{-2}\left(T_i\right)\right|\right)\\
&\leq
2(n-1) \min_{1 \leq k \leq K} \sum_{i=1}^{2n} \max_{1 \leq r \leq 2n}\big|g^{-2}(T_{r})-g_{k}^{-2}(T_r)\big|\\
&=
4n(n-1) \min_{1 \leq k \leq K} \max_{1 \leq r \leq 2n}  \big|g^{-2}(T_r)-g_{k}^{-2}(T_r)\big|.
\end{align*}

Thus, \eqref{min_Tn} holds if
\begin{equation}\label{minmax_gn}
\min_{1 \leq k \leq K}  \max_{1 \leq r \leq 2n} \left(  \big|g^{-2}(T_r)-g_{k}^{-2}(T_r)\big|\right) \leq \delta^*,
\end{equation}
where $\delta^*=(\delta/\epsilon^2)$.

For convenience of arguments, let $w := g^{-2}$ so that 
\begin{equation}\label{W}
\begin{aligned}
\mathscr{W}_\epsilon&= 
\left\{w:\ w = g^{-2} \text { for } g \in \mathcal{G}_{\epsilon}\right\}
= 
\big\{w: [0,\Tmax)\to \left[1, 1 / \epsilon^2\right]:\ \text{$w$ non-decreasing and right-continuous}\big\} .
\end{aligned}
\end{equation} 	
Our task is to find a subset $\mathscr{W}^*=\{w_1,w_2,\cdots,w_K\}$ of $\mathscr{W}_\epsilon$ such that
\begin{equation}\label{eqn:ww}
 \min_{k=1, \ldots, K} \max_{1\leq r \leq 2n}\big|w(T_r)-w_k(T_r)\big|\leq \delta^*,
\end{equation}
To satisfy assumption (i) of Theorem \ref{theo:conv_ustat}, we want to have $\log(K) = o(n)$.

We start by splitting the interval $[1,(1/\epsilon^2)]$ into  $M_{\delta^*}=(1/\epsilon^2)/\delta^*$ sub-intervals and denote the middle points of those intervals by $z_l$, $l=1, \ldots, M_{\delta^*}$, with $z_1 < z_2 < \cdots < z_{M_{\delta^*}}$. Notice that the width of each interval is less than $\delta$. For the convenience of presentation, assume that $T_1 < T_2 < \cdots < T_{2n}$. Notice that the values of functions in $\mathscr{W}^*$ outside of $\{T_1, \ldots, T_{2n}\}$ are irrelevant, so we will determine those values on this grid of times and extend to $[0, T_{max})$ in a any non-decreasing way. Consider a sequence of indices $1 \le l_1 \le l_2 \le \cdots \le l_{2n} \le M_{\delta^*}$. This sequence of indices defines a function $w_{l_1, \ldots, l_{2n}}$ on the grid $T_1, \ldots, T_{2n}$ by
\[
w_{l_1, \ldots, l_{2n}} (T_i) = z_i, \qquad i=1, \ldots, 2n.
\]
These functions are non-decreasing and, due to the construction of the grid $(z_l)$, the inequality \eqref{eqn:ww} is satisfied. Thanks to the monotonicity of indices, the number of those functions does not grow exponentially in $n$. To the contrary, the number of functions (or the number of non-decreasing sequences of indices) is equal the number of configurations of $2n$ balls in $M_\delta$ bins. The latter is bounded from above by $(2n+1)^{M_\delta}$ as in each bin (when seen separately from others) there may be $0,1, \ldots, 2n$ balls.

Summarising, we have shown how to construct a family $\mathscr{W}^*$ of size $K$ that satisfies \eqref{eqn:ww} and $K \le (2n+1)^{M_{\delta^*}}$.

Hence,
    \begin{equation}\label{up_bound}
 		N_1( \delta, \T_n, \bar{\mathscr{H}}_{\epsilon}, \bar{H}_{\epsilon}) \leq (2n+1)^{M_{\delta^*}}.
    \end{equation}
Taking the logarithm of the bound and dividing by $n$ given the limit
    \begin{equation}\label{lim_up_bound}
        \lim_{n\to\infty}  \frac{\log\big((2n+1)^{M_{\delta^*}}\big)}{n} =\lim_{n\to\infty}  \frac{M_{\delta^*} \log(2n+1)}{n} = \lim_{n\to\infty} \frac{M_{\delta^*} }{2n+1} = 0,
    \end{equation}
where the penultimate equality is due to L'Hospital's rule. Hence, assumption (i) of Theorem \ref{theo:conv_ustat} is satisfied.

In order to estimate $N_1(\delta, F_n \otimes F, \bar{\mathscr{H}}_{\epsilon}, \bar{H}^{g})$, we notice that
\begin{equation*}
\bar{\mathscr{H}}_{\epsilon} = \mathscr{V}^1_\epsilon + \mathscr{V}^2_{\epsilon}:=\big\{V_1+V_2: V_1^{g} \in \mathscr{V}^1_{\epsilon}, V_2 \in \mathscr{V}^2_{\epsilon} \big\},
\end{equation*}
where
\begin{equation*}
\mathscr{V}^1_\epsilon = \left\{ V_1^g:\ V_1^g(\mathcal{x}_i,\mathcal{x}_j) =\frac{1}{2}\mathscr{K}(\mathcal{x}_i,\mathcal{x}_j) g^{-2}(\mathcal{x}_i),\ g \in \mathcal{G}_\epsilon \right\}
\end{equation*}
   and
\begin{equation*}
\mathscr{V}^2_\epsilon = \left\{ V_2^g:\ V_2^g(\mathcal{x}_i,\mathcal{x}_j) =\frac{1}{2}\mathscr{K}(\mathcal{x}_j,\mathcal{x}_i) g^{-2}(\mathcal{x}_j),\ g \in \mathcal{G}_\epsilon \right\}.
\end{equation*}
We further have\cite[Lemma~16]{nolan1987u}
\begin{equation}\label{eqn:sum_covering}
N_1(\delta, F_n \otimes F, \bar{\mathscr{H}}_{\epsilon}, \bar{H}_{\epsilon})
\leq
N_1(\delta/4, F_n \otimes F, \mathscr{V}^1_{\epsilon}, \bar{H}_{\epsilon})\ N_1(\delta/4, F_n \otimes F, \mathscr{V}^2_\epsilon, \bar{H}_{\epsilon})
\end{equation}
upon noticing that $\bar{H}_{\epsilon}$ is also an envelope of $\mathscr{V}^1_{\epsilon}$ and $\mathscr{V}^2_{\epsilon}$. Since $\mathcal{K}$ takes values $0$ or $1$, we have
\[
N_1(\delta/4, F_n \otimes F, \mathscr{V}^1_{\epsilon}, \bar{H}_{\epsilon}) \le N_1(\delta/4, 1-G_n, \mathscr{W}_{\epsilon}, \bar{W}_{\epsilon})
\]
with $\bar W_\epsilon = 1/\epsilon^2$ and $G_n$ denoting the empirical tail distribution function of $T_i$, the marginal of $F_n$. Hence, $1-G_n$ is the empirical distribution function of $T_i$ which we will identify with the empirical distribution of $T_i$ in the notation. Denote by $(T_1, \ldots, T_n)$ the points on which $(1-G_n)$ is supported. In order to bound $N_1(\delta/4, 1-G_n, \mathscr{W}_{\epsilon}, \bar{W}_{\epsilon})$ we need to find a family of functions $\mathscr{W}^* = \{w_1, \ldots, w_K\} \subset \mathscr{W}_\epsilon$ such that
\[
\min_{k=1, \ldots, K} (1-G_n)(|w - w_k|) \le \frac{\delta}{4} (1-G_n)(\bar W_\epsilon) = \frac{\delta}{4 \epsilon^2}.
\]
It is therefore sufficient to show that
\begin{equation*}
\min_{k=1, \ldots, K} \max_{1\leq r \leq n}\big|w(T_r)-w_k(T_r)\big|\leq \frac{\delta}{4 \epsilon^2}. 
\end{equation*}
Denoting $\delta^* = \delta / (4 \epsilon^2)$, we can construct the family $\mathscr{W}^*$ in the same way as in the first part of the proof. This yields the estimate
\[
N_1(\delta/4, 1-G_n, \mathscr{W}_{\epsilon}, \bar{W}_{\epsilon}) \le (n+1)^{M_{\delta^*}}.
\]

For the estimate of $N_1(\delta/4, F_n \otimes F, \mathscr{V}^2_\epsilon, \bar{H}_{\epsilon})$, we notice that
\[
N_1(\delta/4, F_n \otimes F, \mathscr{V}^2_\epsilon, \bar{H}_{\epsilon}) \le N_1(\delta/4, 1-G, \mathscr{W}_{\epsilon}, \bar{W}_{\epsilon}).
\]
This is a nearly the same expression as above but with $G_n$ replaced by $G$. As there is no dependence on $n$ here, it is sufficient to show that the covering number on the right-hand side is finite, that is, there is a finite family of functions $\{w_1, \ldots, w_K\} \subset \mathscr{W}_\epsilon$ such that
\[
\min_{k=1, \ldots, K} (1-G)(|w - w_k|) \le \frac{\delta}{4} (1-G)(\bar W_\epsilon) = \frac{\delta}{4 \epsilon^2}.
\]
We will do it in two steps. In the first step, we will show that one can approximate $w$ with a function $\tilde w$ with error $(1-G)(|w - \tilde w|) \le \frac12 \frac{\delta}{4 \epsilon^2}$ that is piecewise constant and can be identified with a non-decreasing function on a finite (discrete) probability space $(\Omega^\mu, \mu)$. In the second step, we will use similar arguments as above to show that there is a family of functions $\tilde{\mathscr{W}}^* := \{ \tilde w_1, \ldots, \tilde w_K\}$ on $(\Omega^\mu, \mu)$, $\Omega^\mu \subset [0, \Tmax]$, which are non-decreasing and taking values in $[1,\epsilon^{-2}]$ and such that
\begin{equation}\label{eqn:tildeW}
\min_{k=1, \ldots, K} \mu(|\tilde w - \tilde w_k|) \le \frac{1}{2} \frac{\delta}{4} \mu(\bar W_\epsilon) = \frac{\delta}{8 \epsilon^2}.
\end{equation}

For the first step, let $\delta' = \delta / 8$. Denote by $a_1, \ldots, a_{M-1}$ atoms of the distribution $G$ with probability greater or equal to $\delta'$, i.e., $G(a_m-) - G(a_m) \ge \delta'$, with $a_M = \Tmax$ (irrespective whether there is an atom there). Denote by $I_l$, $l=1, \ldots, L$, intervals of the form $(b_l, c_l)$ or $[b_l, c_l)$ (depending whether $b_l$ is the atom already handled above) with the probability bounded from above by $\delta'$, i.e., $\int_{I_l} d(1-G(s)) \le \delta'$ (in other words, $G(c_l-) - G(b_l) \le \delta'$ or $G(c_l-) - G(b_l-) \le \delta'$ depending on whether $b_l \in I_l$). The intervals are chosen in such a way that they are disjoint between themselves and disjoint with the selected atoms $a_1, \ldots, a_M$ and exhaust $[0, \Tmax]$:
\[
\bigcup_{l=1}^L I_l \cup \{a_1, \ldots, a_M\} = [0, \Tmax].
\]
Let $w \in \mathscr{W}_\epsilon$ and define
\[
\tilde w(t) =
\begin{cases}
w(a_m), & t = a_m,\\
w(b_l-), & t \in I_l.
\end{cases}
\]
We have
\begin{align*}
(1-G)(|w - \tilde w|)
&=
\sum_{l=1}^L \int_{I_l} |w(s) - \tilde w(s)| d(1-G(s)) + \sum_{m=1}^M |w(a_m) - \tilde w(a_m)| (G(a_m-) - G(a_m))\\
&=
\sum_{l=1}^L \int_{I_l} |w(s) - w(b_l-)| d(1-G(s))\\
&\le
\sum_{l=1}^L \int_{I_l} \big(w(c_l-) - w(b_l-)\big) d(1-G(s))\\
&\le \sum_{l=1}^L \int_{I_l} \big(w(c_l-) - w(b_l)\big) \delta'\\
&\le \big( w(\Tmax-) - w(0-) \big) \delta' \le \frac{\delta'}{\epsilon^2},
\end{align*}
where in the second equality we used that $w = \tilde w$ on atoms (by definition), the first inequality is because $w$ is non-decreasing whereas the second inequality is by the bound on the $1-G$ measure of intervals $I_l$ (by construction).

When evaluating expectations of functions constant on intervals $I_l$, $l=1, \ldots, L$ (such as $\tilde w$) we can view it as taking expectations with respect to a discrete probability measure. Indeed, we can define a probability measure $\mu$ on a space $\Omega^\mu = \{1, \ldots, L+M\}$ such that
\begin{itemize}
\item index $i \in \Omega^\mu$ corresponds to interval $I_l$ or atom $a_m$,
\item the intervals and atoms are ordered in an increasing way, for example (recalling that $I_l$ may not include its left end $b_l$):\\
index $1$ corresponds to $I_1$ if $a_1 > s$, $b_2 > s$, for all $s \in I_1$, or\\
index $1$ corresponds to $a_1$ if $s > a_1$ for all $s \in I_1$.
\end{itemize}
We define
\[
\mu(i) = 
\begin{cases}
G(a_m-) - G(a_m), & \text{$i$ corresponds to atom $a_m$,}\\
\int_{I_l} d(1-G(s)), & \text{$i$ corresponds to interval $I_l$.}
\end{cases}
\]
The function $\tilde w$ has a counterpart on $\Omega\mu$ defined as
\[
w^\mu(i)=
\begin{cases}
w(a_m), & \text{$i$ corresponds to atom $a_m$,}\\
w(b_l-), & \text{$i$ corresponds to interval $I_l$.}
\end{cases}
\]
We emphasise that there is a one-to-one correspondence between non-decreasing functions with values in $[1, \epsilon^{-2}]$ on $\Omega^\mu$ and functions from $\mathscr{W}_\epsilon$ which are constant on intervals $I_l$, $l=1, \ldots, L$ and that their integrals with respect to $\mu$ and $(1-G)$, respectively, are identical. This completes step one.

Denote by $\tilde{\mathscr{W}}_\epsilon$ the family of non-decreasing functions on $\Omega^\mu$ with values in $[1, \epsilon^{-2}]$. In step two, we argue that there is a family $\tilde{\mathscr{W}}^* := \{ \tilde w_1, \ldots, \tilde w_K\} \subset \tilde{\mathscr{W}}_\epsilon$ such that \eqref{eqn:tildeW} holds for every $\tilde w \in \tilde{\mathscr{W}}_\epsilon$. The arguments to construct this family are similar as used for the computation of $N_1(\delta/4, 1-G_n, \mathscr{W}_{\epsilon}, \bar{W}_{\epsilon})$. The difference comes from the fact that the measure $\mu$ is not uniform on $\Omega^\mu$, so the number of approximation values in the construction from the beginning of the proof would depend on $\mu(i)$ for each $i \in \Omega^\mu$. Details are however too similar to merit the repetition of the whole argument, particularly that here we only need to show that the covering number is finite.

Using the one-to-one correspondence between functions on $\Omega^\mu$ and functions which are piecewise constant on $I_l$, $l=1, \ldots, L$, we are going to write $\tilde w$ and $\tilde w_k$ for both, depending on the context. With this convention, for any $w \in \mathscr{W}_\epsilon$ we have
\begin{align*}
\min_{k=1, \ldots, K} (1-G)(|w - \tilde w_k|)
&\le
\min_{k=1, \ldots, K} (1-G)(|w - \tilde w| + |\tilde w - \tilde w_k|)\\
&=
(1-G)(|w - \tilde w|) + \min_{k=1, \ldots, K} (1-G)(|\tilde w - \tilde w_k|)\\
&=
(1-G)(|w - \tilde w|) + \min_{k=1, \ldots, K} \mu (|\tilde w - \tilde w_k|)
\le
\frac{\delta}{8\epsilon^2} + \frac{\delta}{8\epsilon^2} = \frac{\delta}{4} \mu(\bar W_\epsilon),
\end{align*}
which completes the proof that the covering number $N_1(\delta/4, 1-G, \mathscr{W}_{\epsilon}, \bar{W}_{\epsilon})$ is finite.

 	
Finally, we need to show that the third assumption  of Theorem~\ref{theo:conv_ustat} is satisfied. As in \eqref{eqn:sum_covering}
\begin{equation*}
N_1(\delta, F \otimes F, \bar{\mathscr{H}}_{\epsilon}, \bar{H}_{\epsilon})
\leq
N_1(\delta/4, F \otimes F, \mathscr{V}^1_{\epsilon}, \bar{H}_{\epsilon})\ N_1(\delta/4, F \otimes F, \mathscr{V}^2_\epsilon, \bar{H}_{\epsilon})
\end{equation*}
and we only need to show that the right-hand side is finite. As above, we have
\begin{align*}
N_1(\delta/4, F \otimes F, \mathscr{V}^1_{\epsilon}, \bar{H}_{\epsilon}) & \le N_1(\delta/4, 1-G, \mathscr{W}_{\epsilon}, \bar{W}_{\epsilon}),\\
N_1(\delta/4, F \otimes F, \mathscr{V}^2_{\epsilon}, \bar{H}_{\epsilon}) & \le N_1(\delta/4, 1-G, \mathscr{W}_{\epsilon}, \bar{W}_{\epsilon}).
\end{align*}
We have already shown that the covering number on the right-hand side is finite, so the proof of the third assumption of Theorem \ref{theo:conv_ustat} is complete.
    
Because all assumptions of Theorem~\ref{theo:conv_ustat} are satisfied, as $n \to \infty$,
\begin{equation}\label{resultI}
        \text{sup}_{g\in \mathcal{G}_{\epsilon}} \bigg| \frac{2}{n(n-1)} \sum_{1\leq i<j\leq n}\bar{h}^{g}(\mathcal{x}_i,\mathcal{x}_j) - F\otimes F(\bar{h}^g)\bigg| \overset{a.s.}{\to} 0.
\end{equation}
\end{proof}
 
We have demonstrated the uniform convergence of the symmetrised version of the numerator of $\tdu$. Under the same assumptions and using analogous arguments, we can show the uniform convergence of the denominator of $\tdu$ because the difference is only in the indicator function $\indd{S(T_i;Z_i)<S(T_i;Z_j)}$. We state it formally in the following corollary.
 
\begin{corollary}[Convergence of the Symmetrised Denominator of Time-dependent Uno's C-index]\label{corollary:conv_den_tdu}
Let $\mathcal{x}_i=(\dZ_i,T_i,D_i)$, $i=1,2,\ldots$, be independent samples from distribution $F$ on $\mathscr{X}$. Recall the terms inside the summation in the denominator of \eqref{eqn:tdu}, and rewrite the terms as follows
\begin{equation}\label{eqn:psi_g}
 		\psi^{g}(\mathcal{x}_i,\mathcal{x}_j)=\indd{T_i < D_i}\indd{T_i<X_j}g^{-2}(T_i), \qquad g \in \mathcal{G}_{\epsilon}.
\end{equation}
Assuming conditions (R1-R2), we have
\begin{equation*}
 		\displaystyle \text{sup}_{g\in \mathcal{G}_{\epsilon}} \bigg| \frac{2}{n(n-1)} \sum_{1\leq i<j\leq n}\bar{\psi}^g(\mathcal{x}_i,\mathcal{x}_j) - F\otimes F(\bar{\psi}^g)\bigg| \overset{a.s.}{\to} 0,
\end{equation*}
as $n \to \infty$ for any $\epsilon > 0$, where $\bar{\psi}^{g}$ is the symmetrised version of $\psi^g$.
\end{corollary}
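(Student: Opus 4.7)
The plan is to repeat, almost verbatim, the argument used in Lemma~\ref{lemma:conv_num_tdu}, since $\psi^g$ differs from $h^g$ only by the absence of the factor $\indd{S(T_i;\dZ_i)<S(T_i;\dZ_j)}$, which was never used in an essential way beyond the bound $0\le\mathscr{K}(\cdot,\cdot)\le 1$. I will verify the three covering-number conditions of Theorem~\ref{theo:conv_ustat} for the class $\bar{\mathscr{H}}^\psi_\epsilon=\{\bar\psi^g:g\in\mathcal{G}_\epsilon\}$, with the same envelope $\bar H_\epsilon=1/\epsilon^2$ (valid because $g\ge\epsilon$ implies $0\le\psi^g\le 1/\epsilon^2$ and symmetrisation preserves this bound).

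For condition~(i), introduce $\mathscr{K}'(x_i,x_j)=\indd{T_i<D_i}\indd{T_i<X_j}$, which is $\{0,1\}$-valued and independent of $g$. Writing out $\T_n(|\bar\psi^g-\bar\psi^{g_k}|)$ and applying the triangle inequality exactly as in Lemma~\ref{lemma:conv_num_tdu} reduces the problem to bounding $\min_{1\le k\le K}\max_{1\le r\le 2n}|g^{-2}(T_r)-g_k^{-2}(T_r)|$. I would then reuse verbatim the discretisation of $[1,\epsilon^{-2}]$ into $M_{\delta^*}$ sub-intervals with $\delta^*=\delta/\epsilon^2$ and count monotone step functions on the ordered grid $(T_1,\ldots,T_{2n})$ to obtain $N_1(\delta,\T_n,\bar{\mathscr{H}}^\psi_\epsilon,\bar H_\epsilon)\le(2n+1)^{M_{\delta^*}}$, whose logarithm is $o(n)$ by L'Hospital's rule.

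For conditions~(ii) and~(iii), I would use the same splitting $\bar\psi^g=V^1_g+V^2_g$ with $V^1_g(x_i,x_j)=\tfrac12\mathscr{K}'(x_i,x_j)g^{-2}(T_i)$ and $V^2_g(x_i,x_j)=\tfrac12\mathscr{K}'(x_j,x_i)g^{-2}(T_j)$, apply Lemma~16 of \cite{nolan1987u} to factor the covering number, and bound each factor by $N_1(\delta/4,1-G_n,\mathscr{W}_\epsilon,\bar W_\epsilon)$ respectively $N_1(\delta/4,1-G,\mathscr{W}_\epsilon,\bar W_\epsilon)$ with $\mathscr{W}_\epsilon$ as in \eqref{W}. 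The first is bounded by $(n+1)^{M_{\delta^*}}$ by the same step-function construction; the second is finite by the two-step approximation (first pass to a piecewise-constant $\tilde w$ on the discrete space $(\Omega^\mu,\mu)$ built from the atoms and small-probability intervals of $G$, then cover the monotone functions on this finite space).

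The only real obstacle is notational, since one has to carry over the decomposition, envelope, and approximation arguments without the $S$-indicator; no step of the original proof relied on that indicator beyond the uniform bound $0\le\mathscr{K}\le 1$, which $\mathscr{K}'$ inherits. Once conditions~(i)--(iii) are verified, Theorem~\ref{theo:conv_ustat} yields
\[
\sup_{g\in\mathcal{G}_\epsilon}\left|\frac{2}{n(n-1)}\sum_{1\le i<j\le n}\bar\psi^g(\mathcal{x}_i,\mathcal{x}_j)-F\otimes F(\bar\psi^g)\right|\overset{a.s.}{\to}0,
\]
which is the claim. The paper's subsequent use of this corollary (combined with Lemma~\ref{lemma:conv_num_tdu}, uniform closeness of $\widehat G_n$ to $G$, and the Continuous Mapping Theorem) then delivers Theorem~\ref{theo:conv_tdu}.
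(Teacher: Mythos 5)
Your proposal is correct and follows essentially the same route as the paper, which itself disposes of this corollary by observing that the argument for Lemma~\ref{lemma:conv_num_tdu} only ever used the bound $0\le\mathscr{K}\le 1$ on the $g$-independent factor of the kernel, so dropping the indicator $\indd{S(T_i;\dZ_i)<S(T_i;\dZ_j)}$ changes nothing in the envelope, the covering-number estimates for conditions (i)--(iii), or the application of Theorem~\ref{theo:conv_ustat}. Your explicit re-verification of the three conditions with $\mathscr{K}'$ in place of $\mathscr{K}$ is exactly the intended ``analogous argument.''
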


In Lemma~~\ref{lem1} and Corollary~\ref{corollary:conv_den_tdu}, we have demonstrated the convergence of U-statistics whose kernels are the symmetrised terms inside the summations in the numerator and denominator of $\tdu$, respectively. However, we need the convergence of the non-symmetrised terms inside the summations. Since the convergence of the symmetrised functions results in the convergence of the respective non-symmetrised functions, the lemma and corollary guarantee the convergence of the quantities of interest. The lemma and corollary results will be used to prove Theorem~\ref{theo:conv_tdu}, which is given in the following proof.

\begin{proof}[Proof of Theorem~\ref{theo:conv_tdu}]
Recall that $\mathcal{x}_i=(\dZ_i,T_i,D_i)$ for $i=1,2,\ldots$, is a sample from the distribution $F$, and $h^g$, $g \in \mathcal{G}_\epsilon$ equals
    \begin{equation*}
        h^g(\mathcal{x}_i,\mathcal{x}_j)=\indd{T_i < D_i}\indd{S(T_i;\dZ_i) <S(T_i;\dZ_j)}\indd{T_i<X_j,T_i < \Tmax}g^{-2}(T_{i}),
    \end{equation*}

The numerator of the statistic $\text{C}^w_n$ equals
\[
\sum_{i \ne j}^n h^{\widehat G_n}(x_i, x_j)
=
\frac{2}{n(n-1)}\sum_{1\leq i<j\leq n}\bar{h}^{\widehat G_n}(\mathcal{x}_i,\mathcal{x}_j),
\]
where 
\begin{equation*}\label{eqn:hg_sim}
\bar{h}^g\big(\mathcal{x}_i,\mathcal{x}_j\big)=
	\frac{1}{2}\big(h^g(\mathcal{x}_i,\mathcal{x}_j)+h^g(\mathcal{x}_j,\mathcal{x}_i)\big)
\end{equation*}
is the symmetrised version of $h^g$. 

We have
\begin{align*}
&\bigg|\frac{2}{n(n-1)}\sum_{1\leq i<j\leq n}\bar{h}^{\widehat G_n}(\mathcal{x}_i,\mathcal{x}_j) - F\otimes F(\bar{h}^{G})\bigg|\\
&\le
\bigg|\frac{2}{n(n-1)}\sum_{1\leq i<j\leq n}\bar{h}^{\widehat G_n}(\mathcal{x}_i,\mathcal{x}_j) - F\otimes F(\bar{h}^{\widehat G_n})\bigg|
+
\bigg|F\otimes F(\bar{h}^{\widehat G_n}) - F\otimes F(\bar{h}^{G}) \bigg|\\
&\le
\sup_{g \in \mathcal{G}_\epsilon}\bigg|\frac{2}{n(n-1)}\sum_{1\leq i<j\leq n}\bar{h}^{g}(\mathcal{x}_i,\mathcal{x}_j) - F\otimes F(\bar{h}^{g})\bigg|
+
\bigg|F\otimes F(\bar{h}^{\widehat G_n}) - F\otimes F(\bar{h}^{G}) \bigg|.
\end{align*}
By Lemma~\ref{lem1},  the first term above converges a.s. to $0$ as $n \to \infty$. For the convergence of the second term, we apply the dominated convergence theorem and almost sure the uniform convergence of $\widehat G_n$ to $G$ assumed in the statement of the theorem. In conclusion, we have
\begin{equation}\label{eqn:conv1}
\frac{2}{n(n-1)}\sum_{1\leq i<j\leq n}\bar{h}^{\widehat G_n}(\mathcal{x}_i,\mathcal{x}_j)  \overset{a.s.}{\to} F\otimes F(\bar{h}^{G})
\end{equation}
as $n \to \infty$. 

We argue similarly for the denominator of $\text{C}^w_n$ which is equal to
\[
\frac{2}{n(n-1)} \sum_{1\leq i<j\leq n}\bar{\psi}^{\widehat G_n}(\mathcal{x}_i,\mathcal{x}_j),
\]
where $\bar\psi^g$ is the symmetrisation of $\psi^{g}(\mathcal{x}_i,\mathcal{x}_j)=\indd{T_i < D_i}\indd{T_i<X_j}g^{-2}(T_i)$. We have
\begin{align*}
&\bigg| \frac{2}{n(n-1)} \sum_{1\leq i<j\leq n}\bar{\psi}^{\widehat G_n}(\mathcal{x}_i,\mathcal{x}_j) - F\otimes F(\bar{\psi}^G)\bigg|\\
&\le
\bigg| \frac{2}{n(n-1)} \sum_{1\leq i<j\leq n}\bar{\psi}^{\widehat G_n}(\mathcal{x}_i,\mathcal{x}_j) - F\otimes F(\bar{\psi}^{\widehat G_n})\bigg|
+
\bigg| F\otimes F(\bar{\psi}^{\widehat G_n}) - F\otimes F(\bar{\psi}^G)\bigg|\\
&\le
\text{sup}_{g\in \mathcal{G}_{\epsilon}} \bigg| \frac{2}{n(n-1)} \sum_{1\leq i<j\leq n}\bar{\psi}^g(\mathcal{x}_i,\mathcal{x}_j) - F\otimes F(\bar{\psi}^g)\bigg|
+
\bigg| F\otimes F(\bar{\psi}^{\widehat G_n}) - F\otimes F(\bar{\psi}^G)\bigg|
\end{align*}
The first term on the right-hand side converges to $0$ a.s. as $n\to\infty$ by Corollary \ref{corollary:conv_den_tdu}. For the convergence of the second term, we use again the dominated convergence theorem.

Finally, we apply the Continuous Mapping Theorem \citep{shao2003mathematical} to conclude that 
    \begin{equation}\label{eqn:conv_tdu}
        \frac{\sum_{1\leq i<j\leq n}h^{\widehat G_n}(\mathcal{x}_i,\mathcal{x}_j)}{\sum_{1\leq i<j\leq n}\psi^{\widehat G_n}(\mathcal{x}_i,\mathcal{x}_j)} \overset{a.s.}{\to} \frac{ F\otimes F(h^G)}{\ F\otimes F(\psi^{G})}
    \end{equation}
    as $n \to \infty$. It remains to note that the right-hand side equals 
    \begin{equation}\label{eqn:rhs_conv_tdu}
	\begin{aligned}
            \frac{\ee\big[ \indd{S(T_i;Z_i)<S(T_i;Z_j)}\indd{T_i < T_j, T_i < \Tmax}\frac{\indd{T_i < D_i}}{g(T_i)}\frac{\indd{T_i < D_j}}{g(T_i)}\big]}{\ee\big[\indd{T_i < T_j, T_i < \Tmax}\indd{T_i < D_i}\frac{\indd{T_i < D_i}}{g(T_i)}\frac{\indd{T_i < D_j}}{g(T_i)}\big]}
            &=
            \frac{\prob \big[S(T_i;\dZ_i) < S(T_i;\dZ_j), T_i <T_j, T_i < \Tmax\big]}{\prob \big[T_i <T_j, T_i < \Tmax\big]}\\
            &=
            \prob\big[S(T_i;\dZ_i) < S(T_i;\dZ_j) \big|T_{i} < T_{j}, T_i < \Tmax\big].
	\end{aligned}
    \end{equation}
This completes the proof.    
\end{proof}

\bmsubsection{Convergence of Time-Dependent Uno's C-Index in Discrete-Time Units \label{app1.1c}}

The convergence of discrete-time models is a direct consequence of the convergence of continuous-time models. In this section, we will however rewrite results in the discrete-time setting and comment on simplifications that can be made in proofs. 

We assume that $T_i,D_i$ and $X_i$ belong to $\cT=\{1, \cdots, \Tmax\}$, the discrete set of times. The class $\mathcal{G}_\epsilon$ takes the form \eqref{eqn:G_discrete} which we recall for convenience:
\[
\widetilde{\mathcal{G}}_{\epsilon}=\{g:\{1, \cdots, \Tmax-1\}\to [\epsilon,1]:\ \text{g in non-increasing}\}.
\]

Lemma \ref{lemma:conv_num_tdu} and Corollary \ref{corollary:conv_den_tdu} hold with $\mathcal{G}_\epsilon$ replaced by $\widetilde{\mathcal{G}}_{\epsilon}$. The proof is however significantly simpler. Define
\[
\widetilde{\mathscr{W}}_\epsilon= \left\{w:\ w = g^{-2} \text { for } g \in \widetilde{\mathcal{G}}_{\epsilon}\right\}
= 
\big\{w: \{1, \cdots, \Tmax-1\}\to \left[1, 1 / \epsilon^2\right]:\ \text{$w$ non-decreasing}\big\}.
\]
All estimates of the covering numbers follow from the fact that for each $\delta^* > 0$ there is a family of functions $\{w_1, \ldots, w_K\} \subset \widetilde{\mathscr{W}}_\epsilon$ such that
\[
 \min_{k=1, \ldots, K} \max_{1\leq t < \Tmax}\big|w(t)-w_k(t)\big|\leq \delta
\]
and 
\[
K \le \Tmax^{1/(\delta\epsilon^2)}.
\]
The proof of this fact follows the construction of the family $\mathscr{W}^*$ in the first part of the proof of Lemma \ref{lemma:conv_num_tdu}. Theorem \ref{theo:conv_tdu} takes a simpler form. The proof remains identical.

\begin{lemma}[Convergence of Discrete Time-dependent Uno's C-index]\label{lem:conv_disc_tdu}
Assume the regularity conditions (R1-R2) and that there exists $\epsilon > 0$ such that $\widehat{G}_n, G \subset \widetilde{\mathcal{G}}_{\epsilon}$. If $\widehat{G}_n(t) \to G(t)$ a.s.\ as $n \to \infty$ for every $t \in \{1, \ldots, \Tmax-1\}$, then $\tdu \overset{a.s.}{\to}  \text{C}$.
\end{lemma}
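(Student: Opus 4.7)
The plan is to mirror the proof of Theorem \ref{theo:conv_tdu} step by step, since the structural argument (uniform convergence of U-statistics over the family of censoring tail functions, followed by dominated convergence in $g$ and the Continuous Mapping Theorem) transfers directly to the discrete setting. The only substantive modification is in the covering-number estimates, which become considerably simpler because the time domain $\{1, \ldots, \Tmax-1\}$ is finite.

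First, I would establish the discrete analogues of Lemma \ref{lemma:conv_num_tdu} and Corollary \ref{corollary:conv_den_tdu}: uniform convergence (over $g \in \widetilde{\mathcal{G}}_\epsilon$) of the symmetrised U-statistics based on $\bar h^g$ and $\bar \psi^g$ to $F\otimes F(\bar h^g)$ and $F\otimes F(\bar\psi^g)$, respectively. The envelope is again the constant $\bar H_\epsilon = 1/\epsilon^2$. To verify assumptions (i)--(iii) of Theorem \ref{theo:conv_ustat}, I pass to $w = g^{-2} \in \widetilde{\mathscr{W}}_\epsilon$, i.e.\ non-decreasing functions on $\{1, \ldots, \Tmax-1\}$ with values in $[1, 1/\epsilon^2]$. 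Partitioning $[1, 1/\epsilon^2]$ into $M_{\delta^*} = \lceil 1/(\delta^* \epsilon^2)\rceil$ levels and approximating each $w$ by snapping its value at each of the $\Tmax-1$ time points to the nearest level (preserving monotonicity), I obtain an approximating family of size at most $\Tmax^{M_{\delta^*}}$. Since this bound is finite and independent of $n$, all three covering-number conditions hold trivially (conditions (i) and (ii) because a constant is $o(n)$, condition (iii) directly).

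Having obtained the uniform convergence statements, the proof of the ratio's convergence proceeds exactly as in Theorem \ref{theo:conv_tdu}. For the numerator I would write
\[
\Bigl|\tfrac{2}{n(n-1)}\!\sum_{1\leq i<j\leq n}\!\bar h^{\widehat G_n}(\mathcal{x}_i,\mathcal{x}_j) - F\otimes F(\bar h^{G})\Bigr|
\le \sup_{g \in \widetilde{\mathcal{G}}_\epsilon} \Bigl|\text{U-stat}(\bar h^{g}) - F\otimes F(\bar h^{g})\Bigr| + \bigl|F\otimes F(\bar h^{\widehat G_n}) - F\otimes F(\bar h^{G})\bigr|.
\]
The first term tends to zero almost surely by the uniform convergence just shown, and the second vanishes by dominated convergence: the integrand is bounded by $2/\epsilon^2$, and the pointwise hypothesis $\widehat G_n(t) \to G(t)$ a.s.\ at each of the finitely many $t \in \{1, \ldots, \Tmax-1\}$ gives pointwise convergence of $\bar h^{\widehat G_n}$ to $\bar h^{G}$ almost everywhere. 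An identical argument handles the denominator, and the Continuous Mapping Theorem then yields $\tdu \overset{a.s.}{\to} \text{C}$, with the right-hand side identified via \eqref{eqn:rhs_conv_tdu}.

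The only place where genuine thought is needed is the covering-number bound, and the hard part of the continuous case --- the two-step approximation of $w \in \mathscr{W}_\epsilon$ that first discretises the domain via atoms and intervals of small $(1-G)$-mass and only then quantises the range --- collapses here to a single, combinatorial step, because the domain is already finite and right-continuity is vacuous. Consequently the proof is essentially mechanical, which is why I would merely state the covering-number estimate $K \le \Tmax^{1/(\delta^*\epsilon^2)}$ and point to the structurally identical continuation of the proof of Theorem \ref{theo:conv_tdu}.
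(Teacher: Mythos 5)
Your proposal is correct and follows essentially the same route as the paper: the paper likewise reduces everything to a single finite covering family $\{w_1,\ldots,w_K\}\subset\widetilde{\mathscr{W}}_\epsilon$ with $\max_{1\le t<\Tmax}|w(t)-w_k(t)|\le\delta^*$ and $K\le \Tmax^{1/(\delta\epsilon^2)}$, obtained by the same quantisation of the range over the finite time grid, and then declares the remainder of the argument identical to the proof of Theorem~\ref{theo:conv_tdu}. Your observation that the two-step atom/interval construction of the continuous case collapses here is exactly the simplification the paper intends.
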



\newpage

\bmsection{The Model Architectures}\label{app2}
\begin{table*}[htp]%
\centering
\caption{Nnet-survival architectures}%
\begin{tabular*}{\textwidth}{@{\extracolsep\fill}llllll@{\extracolsep\fill}}%
\toprule
\textbf{Hyper-Parameters} & \textbf{Sim.1} & \textbf{Sim.2} & \textbf{Sim.3} & \textbf{HF Data} & \textbf{TCGA Data}\\
\midrule
\#Hidden layer & 2 & 2 & 2 & 2 & 2\\
\#Nodes in hidden layers  & (16,16) & (16,16) & (512,512) & (8,8) & (64,64)\\
\#Nodes in output layer  & 11 & 15 & 16 & 15 & 11 \\
Hidden activation function & ReLu & ReLu & ReLu & ReLu & ReLu\\
Output activation function & Sigmoid & Sigmoid & Sigmoid & Sigmoid & Sigmoid \\
$L^2$-regularisation & 0.075 & 1E-6  & 1E-6 & 1E-1 & 2E-2\\
Batch size & 250 & 50 & 50 & 52 & 250 \\
\#Epoch & 1000 & 150 & 150 & 300 & 1000 \\
Learning rate & 1E-3 & 1E-2 & 1E-3 & 1E-2 & 1E-3\\
Early Stopping (patience) & Yes (20) & Yes (20) & Yes (20) & Yes (20) & Yes (20)\\
Lower Bound ($\epsilon$) of $\widehat{G}_n$ & 0.02 & 0.02 & 0.02 & 0.02   & 0.02\\
\bottomrule
\end{tabular*}
\label{table:tab1_app}
\end{table*}

\begin{table*}[ht]%
\centering
\caption{Probability distribution of censoring at each period}%

\begin{tabular*}{\textwidth}{@{\extracolsep}ll@{\extracolsep\fill}}%
\toprule
\textbf{Censoring Rate} & \boldmath$(p_1,p_2,\cdots,p_{15})$ \\
\midrule
0\%    & (0, 0, 0, 0, 0 , 0, 0, 0, 0, 0, 0, 0, 0, 0, 1) \\
9\%    & (0.001, 0.009,	0.01, 0.01, 0.05, 0.03,	0.03, 0.01, 0.1, 0.1, 0.1, 0.15, 0.2, 0.2, 0) \\
40\%   & (0.001, 0.2, 0.15,	0.1, 0.1, 0.009, 0.01, 0.01, 0.05, 0.03, 0.03, 0.01, 0.1, 0.2, 0)\\
56\%   & (0.25 , 0.15, 0.15, 0.1, 0.05, 0.03, 0.03, 0.01, 0.01,	0.01, 0.001, 0.009, 0.1, 0.1, 0)\\
64\%   & (0.3, 0.2,	0.15, 0.1, 0.05, 0.03, 0.03, 0.01, 0.01, 0.01, 0.001, 0.009, 0.05, 0.05, 0)\\
73\%   & (0.4, 0.2,	0.15, 0.1, 0.05, 0.03, 0.01, 0.01, 0.01, 0.01, 0.001, 0.009, 0.01, 0.01,	0)\\
\bottomrule
\end{tabular*}
\label{table:tab2_app}
\end{table*}

\begin{table*}[h]%
\centering
\caption{DTSF architectures}%
\begin{tabular*}{\textwidth}{@{\extracolsep\fill}lll@{\extracolsep\fill}}%
\toprule
\textbf{Hyper-Parameters} & \textbf{HF Data} & \textbf{TCGA Data} \\
\midrule
Minimum node size & (3, 10, 25, 50, 75, 100, 125, and 150) & (3, 5, 10, 25, 5, 150, 250, 500, and 1000) \\
\#Trees  & 100 & 100 \\
\#Variables to possibly split at in each node & Rounded
		down square root of the number variables & Rounded
		down square root of the number variables \\
Splitting rule & Hellinger & Hellinger\\
Maximal tree depth   & Unlimited & Unlimited\\
Grow a probability forest &  Yes  &  Yes \\
Lower Bound ($\epsilon$) of $\widehat{G}_n$ & 0.02 & 0.02\\
\bottomrule
\end{tabular*}
\label{table:tab3_app}
\end{table*}

\newpage
\bmsection{Implementation Results}\label{app3}
\begin{figure*}[h]
\centering
\includegraphics[width=400pt,height=14pc]{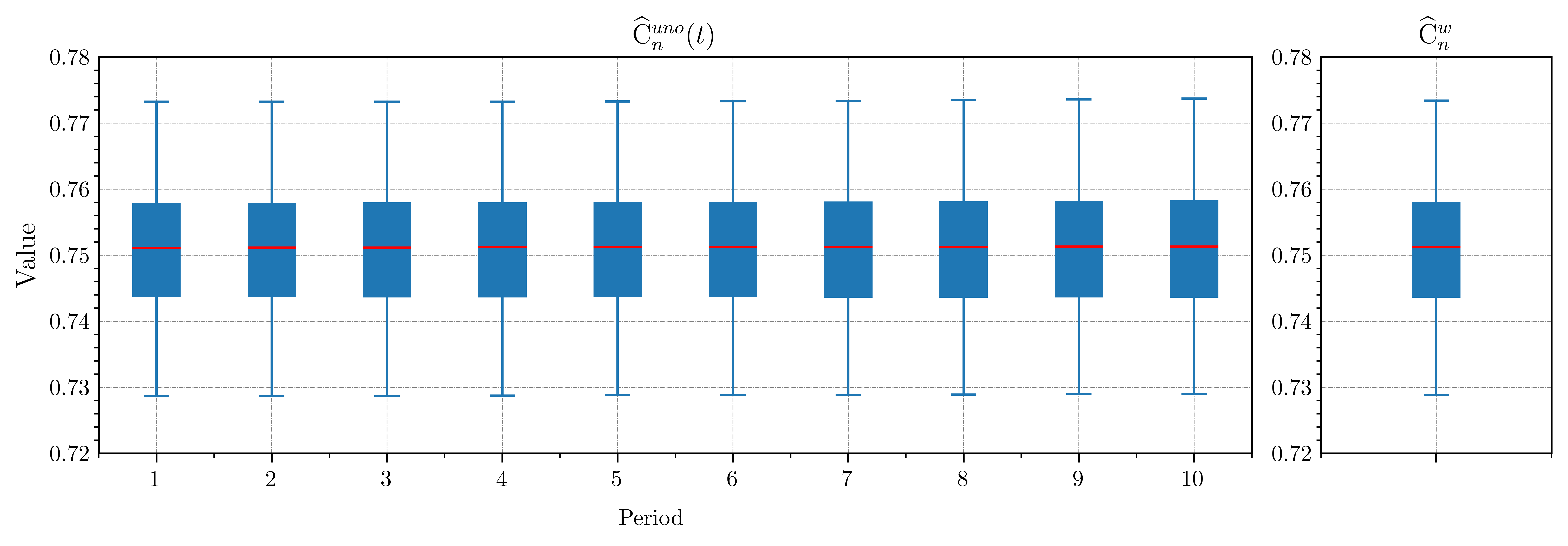}
\caption{ Uno's C-index (Left) and time-dependent Uno's C-index (Right) from a Nnet survival architecture (`Sim.1' column of Table~\ref{table:tab1_app} in Appendix~\ref{app2}) fitted to the PH data over \{1, $\cdots$, $\Tmax -1$\}. They were computed on 100 independent fully uncensored test data ($n_\text{test}$=1000) from a fixed fully uncensored train data ($n_\text{train}$=1000).}
\label{figc1}
\end{figure*}

\end{document}